\documentclass[journal]{IEEEtran}
\usepackage{svg}
\usepackage{cite}
\usepackage{amsmath,amssymb,amsfonts,balance}

\usepackage[algo2e,ruled,linesnumbered,lined,boxed,commentsnumbered]{algorithm2e}
\usepackage[
    colorlinks=true,
    linkcolor=blue,
    citecolor=blue,
    urlcolor=magenta
]{hyperref}
\usepackage{booktabs} 
\usepackage{multirow} 
\usepackage{array}    
\usepackage{algorithmic}
\usepackage{algorithm}
\usepackage{graphicx}
\usepackage{textcomp}
\usepackage{xcolor}
\usepackage{float}
\usepackage{placeins} 
\usepackage{caption}
\usepackage{amsthm}
\newtheorem{theorem}{Theorem}[section]

\newtheorem{lemma}[theorem]{Lemma}
\newtheorem{proposition}[theorem]{Proposition}
\usepackage{subcaption}
\def\BibTeX{{\rm B\kern-.05em{\sc i\kern-.025em b}\kern-.08em
    T\kern-.1667em\lower.7ex\hbox{E}\kern-.125emX}}
\begin{document}

\title{\fontsize{24.88}{28}\selectfont Tensor-Based Dynamic Channel Estimation for mmWave Movable Antenna MIMO Systems}

\author{Zhendong Li, Linchu Chen, Lin Chen, Zhou Su, Ruoyu Zhang, Guangji Chen, Ying Wang, and Wen Chen
\thanks{Zhendong Li and Linchu Chen are with the School of Information and Communication Engineering, Xi’an Jiaotong University, Xi’an 710049, China (email: lizhendong@xjtu.edu.cn; chenlinchu@stu.xjtu.edu.cn). Lin Chen is with the Department of Electrical and Computer Engineering, Stevens Institute of Technology, Hoboken, NJ 07030, USA (e-mail: lchen53@stevens.edu). Zhou Su is with the School of Cyber Science and Engineering, Xi'an Jiaotong University, Xi'an 710049, China (email: zhousu@ieee.org). Ruoyu Zhang and Guangji Chen are with the School of Electronic and Optical Engineering, Nanjing University of Science and Technology, Nanjing 210094, China (e-mail: ryzhang19@njust.edu.cn; guangjichen@njust.edu.cn). Ying Wang is with the State Key Laboratory of Networking and Switching Technology, Beijing University of Posts and Telecommunications, Beijing 100876, China (wangying@bupt.edu.cn). Wen Chen is with the Department of Electronic Engineering, Shanghai Jiao Tong University, Shanghai 200240, China (e-mail: wenchen@sjtu.edu.cn). (Corresponding author: Zhou Su)}
\vspace{-2em}}


\maketitle

\begin{abstract}
This paper investigates the dynamic channel estimation algorithm in mmWave movable antenna (MA) multiple-input multiple-output (MIMO) systems. To achieve highly accurate channel estimation, we propose a tensor decomposition-based channel estimation algorithm. First, by leveraging the path response model and utilizing the intrinsic sparsity of mmWave channels, the channel corresponding to MA pairs at the base station and mobile station is transformed into a superposition of channels from sparse paths. Next, the received signal is constructed as a fourth-order tensor to fully capture the high-dimensional structural information of the MA MIMO channel. Then, two tensor decomposition schemes are adopted to extract the factor matrices,  and our analysis reveals that the uniqueness of the decomposition can be guaranteed in our model. Subsequently, the  propagation loss, frequency offset, angle of arrival/departure, and time delay are obtained based on these factor matrices and the channel matrix can be rebuilt. Additionally, Cramér-Rao bound (CRB) is also derived as a performance evaluation standard, proving that the proposed algorithm achieves a higher estimation accuracy and nearly approaches this minimum bound. Moreover, normalized mean square error (NMSE) is selected as the evaluation metrics for estimation accuracy. Finally, simulation results reveal a notable reduction in the estimation error of the proposed algorithm when compared to the baseline algorithms, confirming its estimation advantage.
\end{abstract}

\begin{IEEEkeywords}
Dynamic channel estimation, movable antenna, path response, tensor decomposition, Cramér-Rao bound.
\end{IEEEkeywords}

\section{Introduction}
\IEEEPARstart {B}{y} using abundance of available high spectrum and spatial diversity, multiple-input multiple-output (MIMO) techniques substantially boosts achievable throughput in wireless networks thanks to its notable benefits such as substantial path loss mitigation, enhanced spectral efficiency, and improved energy efficiency, representing a key enabling technology for the future wireless networks\cite{8284058,8869705}. Nevertheless, the intrinsic geometrical limitations of fixed antenna arrays impose restrictions on the ability of MIMO systems to optimize wireless channel degrees of freedom (DoFs)\cite{10403776}. This deficiency remains existing in massive MIMO with substantial numbers of static antenna elements, whose deployment comes with the expense of significantly elevated hardware expenditures and energy requirements owing to the growing quantities of radio frequency (RF) chains and antennas \cite{1284943}. To overcome these limitations, movable antenna (MA) is investigated as an efficient scheme to utilize the DoFs\cite{10753482}.

Compared to conventional fixed antenna systems, the MA MIMO systems have some appealing advantages by fully exploiting DoFs. First, MA MIMO systems enable comprehensive utilization of the spatial diversity available across a given spatial region through dynamic reconfiguration of its locations to lower the directional correlation between steering vectors, leading to decreased ambiguity and reduced interference effects.\cite{10643473}. Then, MA MIMO systems improved spatial multiplexing performance. Through coordinated optimization of multiple antenna placements, the propagation characteristics between transmitter and receiver can be reconfigured to optimize the MIMO channel capacity\cite{10354003}. Moreover, MA can intelligently adjust their positions to seek improved channel states in contrast to the conventional fixed antenna systems which passively endure random channel fading. This capability allows MA MIMO systems to achieve a higher performance while maintaining or even reducing the  antenna count and requisite RF chains compared to conventional  fixed antenna systems\cite{10286328,11015925}. Therefore, to fully exploit these advantages, the optimization of antenna positions and beamforming emerges as a critical enabler which is critical for harnessing the complete potential of MA MIMO systems.

Nowadays, the antenna position optimization and beam forming problem in MA MIMO systems are extensively investigated, and these existing research can be classified based on the optimization goals, which include optimizing spectral efficiency\cite{10643473, 10354003}, maximizing channel capacity\cite{10286328, 11015925, 10243545}, enhancing energy efficiency\cite{10278220, 10318061}, and ensuring physical layer security\cite{11177504}.  An optimization  problem was formulated in \cite{10643473} a to minimize the Cramér-Rao bound (CRB) through adjustment of the antenna position vector which comprised all MA coordinates. In \cite{10354003}, a joint optimization problem for the positions of MA and the transmit power of users was formulated to minimize the total transmit power. The authors of \cite{10243545} presented a joint optimization framework for MA positioning to enhance both spectral efficiency and energy performance across diverse MIMO system configurations. In \cite{10278220}, a beamforming design problem was investigated by exploiting the DoF via antenna position optimization to obtain complete array gain along the target direction, with null steering enforced in all remaining undesired directions. However, the optimization of MA positions in above mentioned works generally requires the knowledge of the precise channel state information (CSI). Hence, channel estimation must be performed prior to undertaking these optimization tasks.

Some current works have already been carried out regarding the issue of channel estimation in MA systems\cite{11012849,10497534, 11149143, 11087029, 10236898, 10807122}. In particular, In \cite{10497534}, a maximum likelihood channel estimation scheme was developed in \cite{11012849} by capitalizing on channel sparsity to achieve improved estimation performance.  A general channel estimation framework based on compressed sensing (CS) for MA systems was proposed. Based on multiple channel measurements at designated positions, the angle parameters and complex coefficients of the multi-path components are jointly estimated by exploiting the multi-path field response channel structure. In \cite{10236898}, a successive transmitter-receiver CS scheme was proposed to reduce the pilot overhead and computational complexity for channel estimation by exploiting the efficient representation of the channel responses in terms of multi-path components. In \cite{10807122}, a directional sparsity-aided least square (LS) algorithm was proposed for each local unit to estimate the instantaneous CSI. However, both LS and CS estimation algorithm neglect the high-dimensional characteristics in the received signals, which results in unsatisfactory estimation results and high computational complexity. 

To address these issues, tensor decomposition-based channel estimation algorithm, which fully utilizes the high-dimensional property in mmWave MIMO systems, has attracted considerable research attention\cite{7914672, 6573422, 10659325,10839618, 10643882}.  A tensor decomposition-based framework for multi-path channel parameter extraction in MA systems was developed to achieve significant pilot overhead reduction and higher accuracy\cite{10659325}. Compared to traditional matrix-based algorithms, tensor decomposition-based algorithms can render the structural information of signals more explicitly and offer better performance in both computational complexity and estimation accuracy, thereby offering a new perspective for high-dimensional signal processing\cite{9361077}. Furthermore, parameter estimation based on tensor decomposition shows considerable robustness against noise, a property that significantly boosts its practical applicability in low signal-to-noise ratio (SNR) regimes\cite{7891546}. Nevertheless, most existing studies relied upon static channel models, overlooking the variations of CSI in dynamic scenarios. Therefore, it is essential to develop a dynamic channel model for mmWave MA MIMO systems which captures the time varying channel behavior, and more appropriate algorithms are desired to exploit the potential capability of MA systems.

Building upon the preceding analysis, this paper proposes a tensor decomposition-based dynamic channel estimation scheme for mmWave MA MIMO system. In specific, we represent the complex channel model as a superposition of channels over sparse paths by fully exploiting the spatial sparsity inherent in the propagation paths. Moreover, notice that received signal intrinsically exhibits high dimensional structure of MIMO system in dynamic scenarios, we reformulate the signal into a forth-order tensor. The execution of tensor decomposition enhances the discernibility of structural information across the signal's dimensions, consequently permitting the acquisition of more precise channel estimates. The main contributions are outlined as follows:

\begin{itemize}
\item We present a novel dynamic channel estimation framework and build a mmWave MA MIMO system. Specifically, by leveraging the path response model and exploiting the sparse nature of mmWave channels, the channel corresponding to MA pairs at base station (BS) and mobile station (MS) is transformed into a superposition of channels from sparse paths. Thus, we formulate the channel estimation as a sparse signal recovery problem and the channel reconstruction can be implemented by estimating a limited set of parameters. 

\item A comprehensive tensor-based dynamic channel estimation scheme for mmWave MA MIMO systems is proposed. First, by constructing a fourth-order tensor the multi-dimensional characteristics of MIMO received signals and the low-rank structure of the mmWave channel are fully exploited. Subsequently, a channel parameter extraction method based on two tensor decomposition scheme is developed. Furthermore, we provide the uniqueness conditions for the proposed tensor decomposition approach. Finally, to assess the performance of the proposed algorithm, CRB for parameter estimation under this model is derived.

\item The accuracy and effectiveness of the proposed algorithm are validated through extensive simulation experiments. Simulation results show its higher estimation accuracy compared to baseline algorithms. Furthermore, systematic analysis confirms its effectiveness  under varying practical parameters, such as the time slot number and subcarrier number. To further validate the efficiency, we compare the runtime of various algorithms, confirming the lower computational cost of the proposed method. Overall, the simulation results indicate that the two proposed algorithms are effective for different application needs, each offering superior performance in its targeted setting.
\end{itemize}

This paper is arranged in the following manner. Section II introduces the dynamic MA MIMO channel model and formulates the channel estimation problem. Section III represents the tensor decomposition-based channel estimation algorithm, including canonical polyadic (CP) decomposition methods, analysis of uniqueness conditions, subsequent parameter extraction procedures from factor matrices, derivation of CRB and assessment of computational complexity for proposed algorithm. Finally, numerical simulations are provided in Section IV, and the paper is concluded in Section V.

\textit{Notations}: The notation follows the convention of boldface lower-case, boldface upper-case, and calligraphic upper-case letters for vectors, matrices and tensors, as in \( a \), \( \mathbf{a} \), \( \mathbf{A} \), \( \mathcal{A} \), respectively. The \(r\)-th column vector of matrix \(\mathbf{A}\) is represented as \(\mathbf{a}_r\). The $i$-th entry of vector $\mathbf{a}$ is denoted by $\mathbf{a}_i$ and $(i,j)$-th entry of a matrix can be represented by $\mathbf{A}_{[i,j]}$, the same notation is adopted throughout this paper for describing both vectors and tensors. $\mathbf{A}_{[i:j, :]}$ represents the submatrix comprising rows $i$ to $j$ of matrix $\mathbf{A}$. Furthermore, \(\overline{\mathbf{A}}\) and \(\underline{\mathbf{A}}\) are derived by removing the top and bottom rows of \(\mathbf{A}\), respectively. \( \text{diag}(\mathbf{a}) \) denotes a diagonal matrix formed by \( \mathbf{a} \). Conjugation, transpose, Hermitian, pseudo-inverse, $\ell_2$-norm, and Frobenius norm are represented by \( (\cdot)^* \), \( (\cdot)^T \), \( (\cdot)^H \), \( (\cdot)^\dagger \), \( \|\cdot\| _2\), and \( \|\cdot\|_F \), in order. The symbols \(\otimes\), \(\odot\), \(*\), \(\cdot\), and \(\circ\) respectively denote the Kronecker, Khatri-Rao, Hadamard, inner, and outer products. The Kruskal-rank and the standard rank of a matrix \(\mathbf{A}\) are expressed as \(k(\mathbf{A})\) and \(\text{rank}(\mathbf{A})\). The operator \(\text{Re}\{\cdot\}\) extracts the real part of its complex argument. A zero-mean circularly-symmetric complex Gaussian distribution with variance \(\sigma^2\) is denoted by \(\mathcal{CN}(0, \sigma^2)\).

\FloatBarrier\begin{figure}[ht]
    \centering    \includegraphics[width=\linewidth]{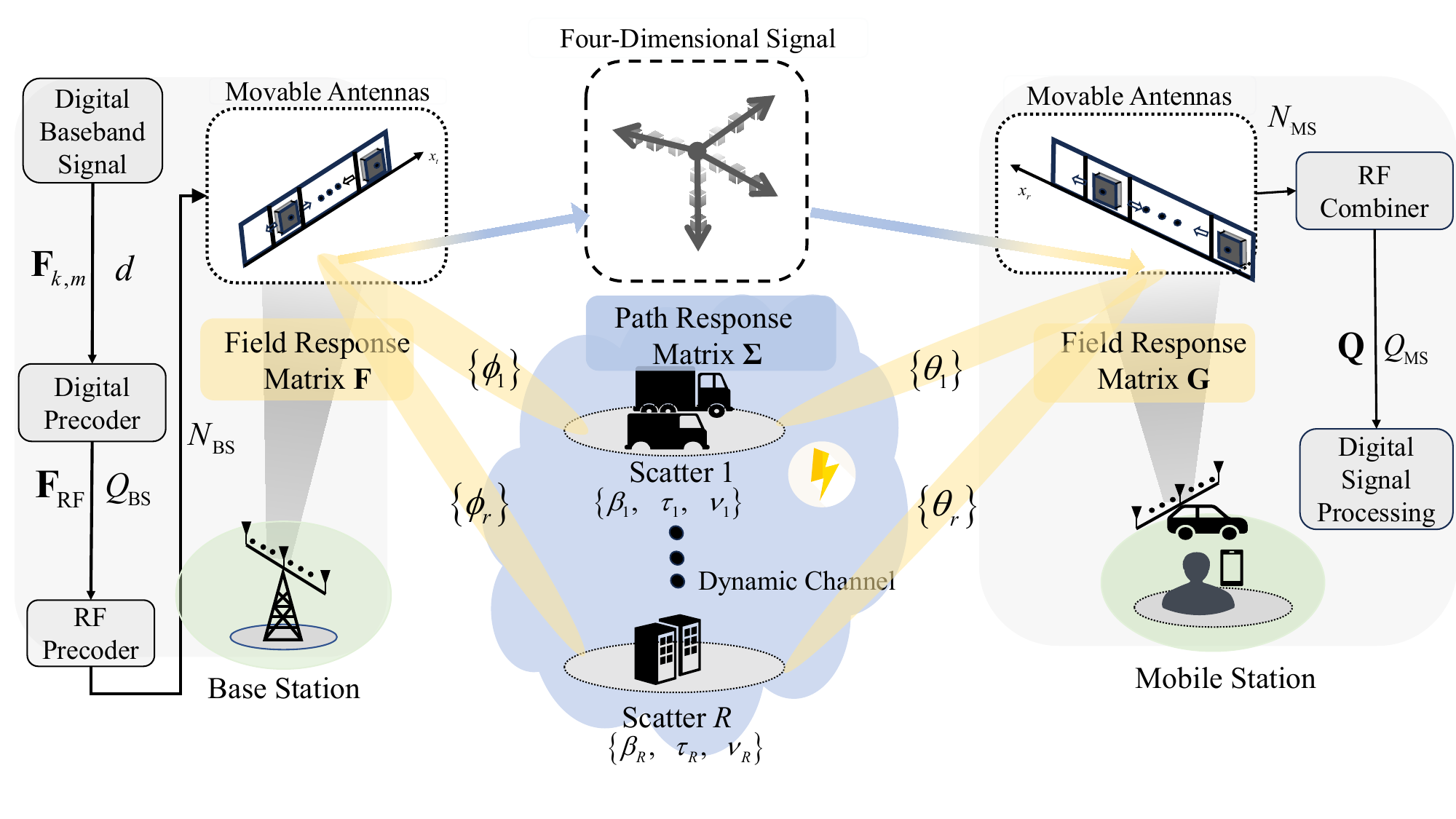}
    \caption{Illustration of a MA MIMO system with dynamic channel.}
    \label{model}
\end{figure}
\section{System Model}
\subsection{mmWave MA MIMO Systems}
Fig. \ref{model} depicts the considered downlink mmWave MA MIMO system which is consisted of a BS and a MS. The BS is deployed with $N_\text{BS}$ MAs while the MS is equipped with $N_\text{MS}$ MAs. Linear MA are deployed in both the BS and MS. We deploy $Q_\text{BS}$ and $Q_\text{MS}$ RF chains in BS and MS, respectively. The system utilizes $B$ bandwidth and $K_t$ orthogonal frequency division multiplexing (OFDM) subcarriers, with a subset of $K$ subcarriers being selected for dynamic channel estimation. The entire transmission period is partitioned into $N$ successive time blocks and each block is further divided into finer time slots.  The $n_s$-th OFDM symbol $\mathbf{s}_{k,m,n_s}\in \mathbb{C}^{d\times 1}$ on the $k$-th subcarrier and $m$-th time slot is digitally precoded, transformed to time domain via $K_t$-point IDFT, appended with a cyclic prefix, and finally processed by the RF precoder, which can be given as
\begin{equation}
\mathbf{x}_{k,m, n_s} = \mathbf{F}_\text{RF}\mathbf{F}_{k,m}\mathbf{s}_{k,m, n_s}.\label{fashexinhao}
\end{equation}
 To simplify the system design, the pilot signals are assumed to be identical across all subcarriers and time slots, i.e., $\mathbf{x}_{k,m,n_s} = \mathbf{x}_{n_s}$. Aggregating these signals column-wise yields the matrix $\mathbf{X} = [\mathbf{x}_{1}, \cdots, \mathbf{x}_{N_s}]$.

 In MA systems, the positions of both transmitter-side and receiver-side antennas can be flexibly adjusted within their respective one-dimensional local regions, denoted as $\mathcal{R}_t$ and $\mathcal{R}_r$. For the $n_\text{BS}$-th trasmitting MA and $n_\text{MS}$-th receive MA, $(n_\text{BS} \in \{1, \cdots , N_\text{BS}\}, n_\text{MS} \in \{1, \cdots , N_\text{MS}\} )$, the positions of them can be specified by the coordinates $x^\text{t}_{n_\text{BS}} \in \mathcal{R}_t$ and $x^\text{r}_{n_\text{MS}}\in \mathcal{R}_r$, respectively. 
Let $\mathbf{t}_n = [x^t_1, . . . , x^t_{N_\text{BS}} ] \in\mathbb{R}^{1\times N_\text{BS}}$ and $\mathbf{r}_n = [x^r_1, . . . , x^r_{N_\text{MS}} ] \in\mathbb{R}^{1\times N_\text{MS}}$ be the collections for positions in the $n$-th time block of $N_\text{BS}$ antennas in BS and $N_\text{MS}$ antennas in MS, respectively. Thus, the steering vector in the mmWave MA MIMO system at the $n$-th time block can be given by
\begin{equation}
    \mathbf{f}(\mathbf{r}_n;\theta_r)=[e^{j\frac{2\pi}{\lambda}\rho(x^r_1; \theta_r)},\cdots,\ e^{j\frac{2\pi}{\lambda}\rho(x^r_{N_\text{MS}}; \theta_r)}]^T,\label{daoxiangshiliang1}
\end{equation}
\begin{equation}
    \mathbf{g}(\mathbf{t}_n;\phi_r)=[e^{j\frac{2\pi}{\lambda}\rho(x^t_1; \phi_r)},\cdots,\ e^{j\frac{2\pi}{\lambda}\rho(x^t_{N_{\text{BS}}}; \phi_r)}]^T,\label{daoxiangshiliang2}
\end{equation}
where $\rho$ represents the influence of the MA positions, for instance $\rho(x^r_1; \theta_r) = x_1^r\text{cos}\theta_r$ and  \(\lambda\) represents the carrier wavelength. Meanwhile, \(\theta_r\) and \(\phi_r\) correspond to the angle-of-arrival (AoA) and the angle-of-departure (AoD), respectively. By adjusting the antenna positions, MA MIMO systems can avoid deep fading and thereby optimize communication performance. The field response matrices of $N_\text{MS}$ transmitting MAs and $N_\text{BS}$ receiving MAs can be respectively written as
\begin{align}
        \mathbf{F}&=\left[\mathbf{f}(\mathbf{r}_n;\theta_1),\cdots ,\mathbf{f}(\mathbf{r}_n;\theta_R)\right]\in\mathbb{C}^{N_\text{MS}\times R},\\
        \mathbf{G}&=\left[\mathbf{g}(\mathbf{t}_n;\phi_1),\cdots,\mathbf{g}(\mathbf{t}_n;\phi_R)\right]\in\mathbb{C}^{N_\text{BS}\times R}.
\end{align}
The received signals first take processing by the common RF combiner. After the removal of the cyclic prefix, a $K_t$-point discrete Fourier transform (DFT) is performed to obtain the frequency domain symbols. Then the received signal can be given by
\begin{equation}
\begin{aligned}
        \mathbf{Y}_{k,m}=&\mathbf{Q}(\mathbf{H}_{k,m}(\mathbf{r},\mathbf{t})\mathbf{X} +\mathbf{N}_{k,m}),\\
\end{aligned}
\end{equation}
where $\mathbf{H}_{k,m}$ denotes the channel matrix in the $m$-th time slot at the $k$-th subcarrier in frequency domain, $\mathbf{Q}$ represents the combiner matrix which is elaborated by collecting all the combiner vectors, and the additive white Gaussian noise on the $k$-th subcarrier at the $m$-th time slot is denoted by $\mathbf{N}_{k,m}$.
\subsection{Time Varying Channel Framework}
To capture the variation of the CSI over time,  as illustrated in Fig. \ref{pilot}, the entire transmission period is partitioned into $N$ successive time blocks. For MA MIMO systems, the MAs can move in the local regions, indicating that the positions of MA can be changed over each time blocks for optimizing the communication performance. Each block is further divided into finer time slots. The first $M$ time slots are selected for channel estimation and each time slots contain $N_s$ OFDM symbols. The limited symbol period ensures minimal channel variation in several time slots \cite{11069254}. The selection of $N$ depends on prior knowledge, such as the rate of change of channel parameters and the adjustment speed of the MA system, to ensure that channel estimation and antenna position optimization can be accomplished within a single time block. Given the short duration of each slots, the channel can be considered approximately static within several time slots in a time block.
\begin{figure}[H]
    \centering    \includegraphics[width=\linewidth]{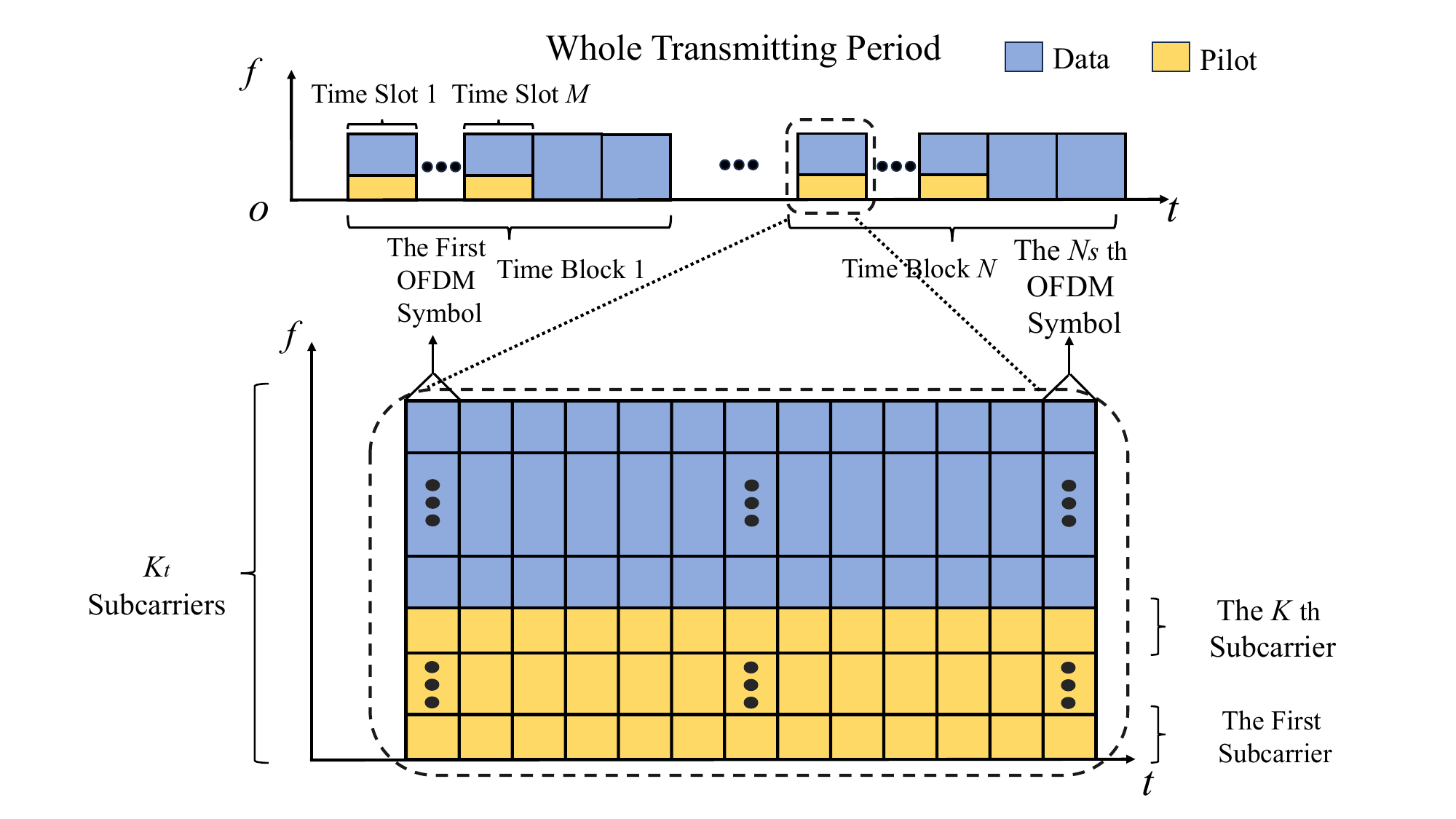}
    \caption{Pilot transmission structure for dynamic channel estimation.}
    \label{pilot}
\end{figure}
By leveraging the intrinsic sparsity in mmWave channels, the channel corresponding to MA pairs at BS and MS can be considered as a superposition of channels from sparse paths in the spatial domain. We consider a multi-path channel model with $R$ propagation paths. Thus, the time domain channel matrix can be written as\cite{10318061} 
\begin{equation}
\mathbf{H}_{t,\tau}(\mathbf{r}_n,\mathbf{t}_n) = \sum_{r = 1}^{R} \mathbf{f}(\mathbf{r}_n;\theta_r)( \beta_{r} e^{j2\pi \nu_{r}t}\delta(\tau - \tau_r)) \mathbf{g}(\mathbf{t}_n;\phi_r) ^T,\label{shiyuxindaojuzhen}
\end{equation}
where $\beta_r$, $\nu_r$, and $\tau_r$ denote the propagation loss following the complex Gaussian distribution, frequency offset caused by mobility, and time delay in the $r$-th path, respectively. The discrete form in time domain of (\ref{shiyuxindaojuzhen}) can be written as
\begin{equation}
    \begin{aligned}
    &\mathbf{H}(m,\tau)=\sum_{r = 1}^{R}\mathbf{f}(\mathbf{r}_n;\theta_r) \boldsymbol{\Sigma}_{\tau,m[r,r]}\mathbf{g}(\mathbf{t}_n;\phi_r)^T\\
    &=\!\!\!\sum_{r = 1}^{R}\!\mathbf{f}(\mathbf{r}_n;\theta_r)\!(\beta_{r} e^{j2\pi \nu_r(\tau_r+(m-1)N_sT_s)} \delta(\tau - \tau_r)\!)\mathbf{g}(\mathbf{t}_n;\phi_r)^T\!\!\!,
    \end{aligned}
    \label{88}
\end{equation}
where $T_s = 1/\Delta f$ denotes the symbol duration in OFDM systems.  $\boldsymbol{\Sigma}_{m,\tau}$ denotes the time domain path response matrix between the local regions of MA in the BS and MS. The diagonal elements of $\boldsymbol{\Sigma}_{m,\tau}$ can be written as
\begin{equation}
    \boldsymbol{\Sigma}_{m,\tau [r,r]} = \beta_{r} e^{j2\pi \nu_{r}(\tau_r+(m-1)N_sT_s)} \delta(\tau - \tau_r).
\end{equation}

Based on (\ref{88}), the channel matrix in the $m$-th time slot at the $k$-th subcarrier in frequency domain can be given by
\begin {equation}
\begin{aligned}
    &\mathbf{H}(k,m)= \mathbf{F}\boldsymbol{\Sigma}_{k,m}  \mathbf{G}^T\\
    &= \!\!\sum_{r = 1}^{R}\!\mathbf{f}(\mathbf{r}_n;\theta_r)\!(\beta_{r}e^{j2\pi \nu_{r}(\tau_r+(m-1)N_sT_s)-j2\pi \frac{k}{K_t}f_s\tau_r}) \mathbf{g}(\mathbf{t}_n;\phi_r)^T
\!\!\!,\label{pinyuxindaojuzhen}
\end{aligned}
\end {equation}
where $f_s$ denotes sampling frequency. Existing channel estimation schemes for MA systems rely on continuously adjusting antenna positions to acquire channel information for all candidate locations, leading to considerable pilot overhead and time waste due to antenna movement. Notably, owing to the compact size of mmWave antennas, the field response channels across all MA position pairs share a common finite set of parameters, which naturally suggests adopting tensor representations for efficient channel estimation \cite{10659325}.

\subsection{Tensor Modeling for Received Signals}
We define that
\begin{align}
    \mathbf{b}_r&=\beta_re^{j2\pi\tau_r\nu_r}\Big[e^{-j2\pi(f_s\tau_r)/K_t},\cdots,e^{-j2\pi K(f_s\tau_r)/K_t}\Big]^T,\\
     &\mathbf{c}_r=\Big[1,e^{j2\pi \nu_rN_sT_s},\ \cdots\ ,\ e^{j2\pi \nu_r(M-1)N_sT_s}\Big]^T,
     \label{cdyinzixiangliang}
\end{align}
are the factor vectors and we have $\boldsymbol{\Sigma}_{m,k[r,r]} = \mathbf{b}_r[k]^T\mathbf{c}_r[m]$. Aggregating the received signals from the $K$ pilot subcarriers and $M$ time slots yields a fourth-order CP decomposition tensor representation, which can be written as $\mathcal{Y}$, i.e.,
\begin{align}
        \mathcal{Y}&=\!\bigg(\!\sum_{r=1}^R \mathbf{f}(\mathbf{r}_n;\theta_r)\circ\mathbf{b}_r\circ\mathbf{c}_r\circ\mathbf{X}^T\mathbf{g}(\mathbf{t}_n;\phi_r)+\mathcal{N}\!\bigg)\times_1\!\mathbf{Q}\\
        &=\sum_{r=1}^R\!\bigg(\mathbf{Q}^T\mathbf{f}(\mathbf{r}_n;\theta_r))\circ\mathbf{b}_r\circ\mathbf{c}_r\circ(\mathbf{X}^T\mathbf{g}(\mathbf{t}_n;\phi_r)\bigg)\!+\mathcal{N}^\mathbf{Q},\nonumber
        \label{tensor}
\end{align}
where $\times_n$ denotes the mode-$n$ product of tensor and $\mathcal{N}$ represents the noise tensor, whose entries follow i.i.d. circularly symmetric complex Gaussian distribution with variance ${\sigma_n}^2$, and $\mathcal{N}^{\mathbf{Q}}=\mathcal{N}\ \times_1 \mathbf{Q}$. Similar to (\ref{cdyinzixiangliang}), we define that $\mathbf{a}_r=\mathbf{Q}^T\mathbf{f}(\mathbf{r}_n;\theta_r)$ and $\mathbf{d}_r=\mathbf{X}^T\mathbf{g}(\mathbf{t}_n;\phi_r)$ are also the factor vectors of tensor $\mathcal{Y}$. As seen from (\ref{88}), the unknown parameters are nonlinearly coupled, which poses a challenge for joint estimation. By CP decomposition, this problem can be effectively solved. The factor matrces of the tensor signals can be given by
\begin{equation*}
        \begin{aligned}
    \mathbf{A}=[\mathbf{Q}^T\mathbf{f}(\mathbf{r};\theta_1),\cdots,\mathbf{Q}^T\mathbf{f}(\mathbf{r};\theta_R)],\ \ \mathbf{B}=[\mathbf{b}_1,\mathbf{b}_2,\cdots,\mathbf{b}_R],\\
     \mathbf{C}=[\mathbf{c}_1,\mathbf{c}_2,\cdots,\mathbf{c}_R],
     \ \ \mathbf{D}=[\mathbf{X}^T\mathbf{g}(\mathbf{t};\phi_1),\cdots,\mathbf{X}^T\mathbf{f}(\mathbf{t};\phi_R)].
     \label{yinzijuzhen}
\end{aligned}
\end{equation*}
These four matrices $\{ \mathbf{A}, \mathbf{B}, \mathbf{C}, \mathbf{D}\}$ are factor matrices associated with a noiseless version of $\mathcal{Y}$, and the information of parameters are contained by these matrices. The Tucker decomposition of $\mathcal{Y}$ can be given by
\begin{equation}
\begin{aligned}    \mathcal{Y}&=\mathcal{I}_{4,L}\times_1\mathbf{A}\times_2\mathbf{B}\times_3\mathbf{C}\times_4\mathbf{D}+\mathcal{N}^\mathbf{Q}\\
&=\mathcal{Z}+\mathcal{N}^{\mathbf{Q}},
\label{parafac}
\end{aligned}
\end{equation}
which explicitly captures the mathematical relationship between the received tensor signal and its corresponding factor matrices and $\mathcal{Z}$ denotes the noise-free tensor signal. 
\section{Tensor Decomposition-Based Channel Estimation Algorithm}
This section outlines a two-stage algorithm for channel parameter estimation utilizing tensor decomposition. In the first stage, we delineate the procedure for estimating the factor matrices  (\ref{yinzijuzhen}) through two distinct approaches: alternating least squares (ALS) and structured canonical polyadic decomposition (SCPD). Subsequently, the second stage describes the method for extracting the unknown parameters from the obtained factor matrices. Furthermore, the uniqueness condition of tensor decomposition, derivation of CRB and analysis of computational complexity are discussed. 

Owing to the inherent sparsity of mmWave channels, the number of propagation paths $R$ is typically limited compared to the signal dimensions. This sparsity results in an inherently low-rank composition of the tensor $\mathcal{Y}$ and provides favorable conditions for proving the uniqueness of the factor matrices abstraction\cite{10938032}.  Consequently, the parameter set $\left\{\beta_r, \nu_r, \theta_r, \phi_r, \tau_r\right\}$ can be effectively estimated from through CP decomposition of the received signal tensor $\mathcal{Y}$. The problem of factor matrices extraction for tensor signals can be formulated as
\begin{equation}
\min_{\hat{\mathbf{a}}_r,\hat{\mathbf{b}}_r,\hat{\mathbf{c}}_r,\hat{\mathbf{d}}_r}\Big\|\mathcal{Y}-\sum_{r=1}^R \hat{\mathbf{a}}_r\circ\hat{\mathbf{b}}_r\circ\hat{\mathbf{c}}_r\circ\hat{\mathbf{d}}_r\Big\|^2_F .
\label{fenjiewenti}
\end{equation}
\subsection{CP Decomposition Method}

\subsubsection{ALS}
To capture the high-dimensional structural information inherent in tensor received signals, an ALS based tensor decomposition algorithm is presented. This algorithm iteratively solves for the factor matrices until convergence criteria are met, at which point the solution is considered an optimal estimation of these matrices. This optimization problem can be efficiently solved through alternating minimization, which iteratively reduces the data fitting error by updating one factor matrix while keeping the remaining fixed \cite{doi:10.1137/07070111X,10938032,7914672}. The computational procedure is outlined as follows:
\begin {align}
    \hat{\mathbf{A}}^{(t+1)}&=\text{arg}\min_{\hat {\mathbf{A}}}\Big\|\mathbf{Y}_{(1)}\!-\!\hat{\mathbf{A}}(\mathbf{D}^{(t)}\odot \mathbf{C}^{(t)} \odot\mathbf{B}^{(t)})^T\Big\|_F^2,    \label{alstendec}\\
    \hat{\mathbf{B}}^{(t+1)}&=\text{arg}\min_{\hat {\mathbf{B}}}\Big\|\mathbf{Y}_{(2)}-\hat{\mathbf{B}}(\mathbf{D}^{(t)}\odot \mathbf{C}^{(t)} \odot\mathbf{A}^{(t+1)})^T\Big\|_F^2,\nonumber\\
    \hat{\mathbf{C}}^{(t+1)}&=\text{arg}\min_{\hat {\mathbf{C}}}\Big\|\mathbf{Y}_{(3)}\!-\hat{\mathbf{C}}(\mathbf{D}^{(t)}\odot \mathbf{B}^{(t+1)} \odot\mathbf{A}^{(t+1)})^T\!\Big\|_F^2,\nonumber\\
    \hat{\mathbf{D}}^{(t+1)}&=\!\text{arg}\min_{\hat {\mathbf{D}}}\!\Big\|\mathbf{Y}_{(4)}\!-\hat{\mathbf{D}}(\mathbf{C}^{(t+1)}\odot \mathbf{B}^{(t+1)} \odot\mathbf{A}^{(t+1)})^T\!\Big\|_F^2,\nonumber
\end {align}
where $\mathbf{Y}_{(n)}$ denotes the mode-$n$ unfolding of the tensor $\mathcal{Y}$. Note that the obtained factor matrices are linked to the true factor matrices by
\begin{equation}
    \begin{aligned} 
\widehat{\mathbf{A}} = \mathbf{A}{\mathbf{\Lambda}}_{1}\mathbf{\Pi} + {\mathbf{E}}_{1}, \quad\quad\widehat{\mathbf{B}} = \mathbf{B}{\mathbf{\Lambda}}_{2}\mathbf{\Pi} + {\mathbf{E}}_{2}, \\ \widehat{\mathbf{C}} = \mathbf{C}{\mathbf{\Lambda}}_{3}\mathbf{\Pi} + {\mathbf{E}}_{3}, \quad\quad\widehat{\mathbf{D}} = \mathbf{D}{\mathbf{\Lambda}}_{4}\mathbf{\Pi} + {\mathbf{E}}_{4} ,
\end{aligned}
\end{equation}
where $\{\mathbf{\Lambda}_{1}, \mathbf{\Lambda}_{2}, \mathbf{\Lambda}_{3}, \mathbf{\Lambda}_{4}\}$ are unknown scaling diagonal matrices which satisfy $\mathbf{\Lambda}_{1}\mathbf{\Lambda}_{2}\mathbf{\Lambda}_{3}\mathbf{\Lambda}_{4} = \mathbf{I}$, and   $\mathbf{\Pi}$ denotes the unknown permutation characteristic of the recovered matrices. $\mathbf{E}_{i}, \ i\in\{1,\cdots,4\}$ denotes the error corresponding to the four estimated factor matrices. It is noteworthy that the inherent scaling ambiguities of CP decomposition do not impede the estimation of angles and delays. This is because these ambiguities still maintain the correlation characteristics of the factor matrix columns. Moreover, the permutation matrix $\boldsymbol{\Pi}$ can be disregarded in the next analysis since it is shared across all four factor matrices, thus not affecting the parameter estimation procedure. 

ALS algorithm exhibits strong generality, making it applicable to any CP decomposition problem without imposing specific structural requirements on the factor matrices. Furthermore, it readily accommodates various constraints, as each least squares step can be easily adapted into a constrained least squares formulation, rendering it suitable for diverse and flexible application scenarios\cite{doi:10.1137/07070111X, 6166354}.
\subsubsection{SCPD}
While the ALS-based approach is theoretically straightforward and simple operation, it fails to incorporate the inherent structural constraints of factor matrices and tends to impose limitations on the maximum number of resolvable targets\cite{10403776}. To mitigate this problem, by leveraging the inherent structural information of the Vandermonde matrices, we introduce a method that directly computes the factor matrices through matrix algebra, eliminating the need for the iterative solving process in ALS algorithm. This approach avoids potential convergence issues associated with iterative algorithms, thereby enhancing the stability of the CP decomposition.

Since the factor matrices of a tensor may contain rank-deficient matrices (e.g., factor matrices $\mathbf{A}$ and $\mathbf{D}$), this problem will lead to rank deficiency in mode-unfolded matrices of the tensor,  which is an undesirable outcome when performing tensor decomposition\cite{6573422}. Therefore, prior to performing tensor decomposition, we must first apply spatial smoothing processing on the received signal to address this issue. Spatial smoothing is performed by partitioning a uniform linear array into several overlapping subarrays\cite{890366}. 

By leveraging the Vandermonde structure inherent in $\mathbf{B}$ and $\mathbf{C}$,  the spatial smoothing technique maps $\mathbf{B}\in\mathbb{C}^{K\times R}$ and $\mathbf{C}\in \mathbb{C}^{M\times R}$ into $\mathbf{B}^{(k_1)}(\mathbf{B}^{(l_1)})^T \in \mathbb{C}^{k_1\times l_1}$ and $\mathbf{C}^{(k_2)}(\mathbf{C}^{(l_2)})^T$, where $K = l_1 +k_1 - 1$ and $M= l_2+ k_2 - 1$. By employing spatial smoothing techniques, we effectively resolve the rank deficiency problem through an increase in the tensor order. As a result, we increase the order of $\mathbf{Y}^{[3]}$ by two and reformulate $\mathbf{Y}^{[3]}= (\mathbf{A} \odot \mathbf{B}\odot \mathbf{C})\mathbf{D}^T \in \mathbb{C}^{Q_\text{MS}K \times MN_s}$ into $\mathbf{X}^{[3]}$, which can be given as\cite{6573422}
\begin{equation}
\begin{aligned}
    \mathbf{X}^{[3]}&\in\mathbb{C}^{Q_\text{MS}k_1k_2\times l_1l_2N_s}\\
    &= \left(\mathbf{A} \odot \mathbf{B}^{(k_1)} \odot \mathbf{C}^{(k_2)}\right) \left(\mathbf{B}^{(l_1)} \odot \mathbf{C}^{(l_2)} \odot \mathbf{D}\right)^T,
\end{aligned}
\end{equation}
where $\mathbf{A} \odot \mathbf{B}^{(k_1)}$, $\mathbf{C}^{(k_2)}$, $\mathbf{B}^{(l_1)}$ and $\mathbf{C}^{(l_2)} \odot \mathbf{D}$ are full column rank matrices. To be specific, the spatial smoothing operation can be implemented by constructing a cyclic selection matrix $\mathbf{J} = \left[\mathbf{J}_{1,1}, \cdots, \mathbf{J}_{1,l_1}, \mathbf{J}_{2,1}, \cdots, \mathbf{J}_{l_2,l_1}\right]$, in  which $\mathbf{J}_{m,n} = \mathbf{J}_n\otimes\mathbf{J}_m\otimes\mathbf{I}_{Q_\text{MS}},\quad m\in\{1,\cdots,l_2\}, n\in\{1,\cdots,l_1\}$, where $\mathbf{J}_{n} =\big[\boldsymbol{0}_{k_1\times (n-1)}, \mathbf{I}_{k_1}, \boldsymbol{0}_{k_1\times (l_1-n)}\big]$ and $\mathbf{J}_{m} =\big[\boldsymbol{0}_{k_2\times (m-1)}, \mathbf{I}_{k_2}, \boldsymbol{0}_{k_2\times (l_2-m)}\big]$. By varying $(m,n)$ from $(1,1)$ to $(l_2,l_1)$ yielding the cyclic selection matrix $\mathbf{J}$ and we have
\begin{equation}
    \begin{aligned}
         \mathbf{J}\mathbf{Y}^{[3]}=\left(\mathbf{A} \odot \mathbf{B}^{(k_1)} \odot \mathbf{C}^{(k_2)}\right) \boldsymbol{\Lambda}\left(\mathbf{B}^{(l_1)} \odot \mathbf{C}^{(l_2)} \odot \mathbf{D}\right)^T,
    \end{aligned}
\end{equation}
in which $\boldsymbol{\Lambda}$ denotes a scaling diagonal matrix\cite{10403776}. 

Noting the orthogonality between the signal subspace and the noise subspace, an estimation of signal parameters via rotational invariance techniques (ESPRIT)-like algorithm can be employed to achieve more accurate factor matrix decomposition. This method is widely utilized in blind signal estimation algorithms\cite{6573422}. To be specific, consider the matrix representation of $\mathcal{X}$ and let
\begin{equation}
    \mathbf{X}^{[3]} = \mathbf{U} \boldsymbol{\Sigma} \mathbf{V}^H = \mathbf{U}_s \boldsymbol{\Sigma}_s \mathbf{V}_s^H+\mathbf{U}_n \boldsymbol{\Sigma}_n \mathbf{V}_n^H,
    \label{singulardecomposition}
\end{equation}
denote the singular value decomposition (SVD) of $\mathbf{X}^{[3]}$, where the \(R\) principal singular vectors of \(\mathbf{U}\) and \(\mathbf{V}\) form the basis for the signal subspaces \(\mathbf{U}_s\) and \(\mathbf{V}_s\), respectively, and \(\boldsymbol{\Sigma}_s\) is a diagonal matrix containing the \(R\) largest singular values. Since the matrices $\left(\mathbf{A} \odot \mathbf{B}^{(k_1)} \odot \mathbf{C}^{(k_2)}\right)$ and $\left(\mathbf{B}^{(l_1)} \odot \mathbf{C}^{(l_2)} \odot \mathbf{D}\right)$ have full column rank, we know that there exists a nonsingular matrix $\mathbf{M} \in \mathbb{C}^{R \times R}$ such that
\begin{equation}
\begin{aligned}
        \mathbf{UM}=\mathbf{A} \odot \mathbf{B}^{(k_1)} \odot \mathbf{C}^{(k_2)}.
    \label{hangkongjian}
\end{aligned}
\end{equation}
Due to the Vandermonde structure of $\mathbf{B}^{(k1)}$ we have
\begin{equation}
\mathbf{A}_{[1,\  :]}\odot\underline{\mathbf{B}}^{(k_1)}\odot\mathbf{C}^{(k_2)}\mathbf{Z}_b = \mathbf{A}_{[1,\  :]}\odot\overline{\mathbf{B}}^{(k_1)}\odot\mathbf{C}^{(k_2)},
\label{pinghuayihang}
\end{equation}
where $\mathbf{Z}_b = \text{diag}([b_1, b_2, \cdots, b_R])$. Considering the Vandermonde structure, for any row of $\mathbf{A}$, e.g., $\mathbf{A}_{[i,\  :]}, 0 < i\leq Q_\text{MS}$ the following properties can be obtained from (\ref{hangkongjian})
\begin{equation}
    \begin{aligned}
        &\mathbf{U}_{1 \ [(i-1)Q_\text{MS}k_1k_2+1: (i-1)Q_\text{MS}k_1k_2+(k_1-1)k_2,\ :]}\\
        &\quad\quad\quad\quad\quad\quad= \left(\mathbf{A} _{[i,\  :]}\odot \underline{\mathbf{B}}^{(k_1)}\odot \mathbf{C}^{(k_2)} \right)\mathbf{M} ^{-1},\\
        &\mathbf{U}_{2 \ [(i-1)Q_\text{MS}k_1k_2+1: (i-1)Q_\text{MS}k_1k_2+(k_1-1)k_2,\ :]}  \\
        &\quad\quad\quad\quad\quad\quad=\left(\mathbf{A} _{[i,\  :]}\odot\overline{\mathbf{B}}^{(k_1)}\odot \mathbf{C}^{(k_2)}\right)\mathbf{M}^{-1},
        \label{topbutt}
    \end{aligned}
\end{equation}
where
\begin{equation}
\begin{aligned}
       &\mathbf{U}_{1 \ [(i-1)Q_\text{MS}k_1k_2+1: (i-1)Q_\text{MS}k_1k_2+(k_1-1)k_2,\ :]}= \\
       & \quad\quad\mathbf{U}_{[(i-1)k_1k_2+1:(i-1)k_1k_2+(k_1-1)k_2,\ :]}\in \mathbb{C}^{((k_1-1)k_2)\times R},\\
       &\mathbf{U}_{2 \ [(i-1)Q_\text{MS}k_1k_2+1: (i-1)Q_\text{MS}k_1k_2+(k_1-1)k_2,\ :]} =\\
       & \quad\quad\mathbf{U}_{[(i-1)k_1k_2+k_2+1:(i-1)k_1k_2+k_1k_2,\ :]}\in \mathbb{C}^{((k_1-1)k_2)\times R}.
    \label{28}
\end{aligned}
\end{equation}
Integrating equations (\ref{hangkongjian}), (\ref{pinghuayihang}) and (\ref{topbutt}) leads to the following set of equalities
\begin{equation}
\begin{aligned}
        \mathbf{U}_2 \mathbf{M} &=\mathbf{A}\odot\overline{\mathbf{B}}^{(k_1)}\odot \mathbf{C}^{(k_2)}\\
        &= \mathbf{A} \odot \underline{\mathbf{B}}^{(k_1)}\odot \mathbf{C}^{(k_2)}\mathbf{Z}_b = \mathbf{U}_1\mathbf{M}\mathbf{Z}_b.
    \label{u1u2}
\end{aligned}
\end{equation}
Therefore, \(\mathbf{U}_2 = \mathbf{U}_1 \hat{\mathbf{Z}}\), where \(\hat{\mathbf{Z}} = \mathbf{U}_1^\dagger\mathbf{U}_2 = \mathbf{M} \mathbf{Z}_b \mathbf{M}^{-1}\). The Vandermonde generators $\{z_{b,r}\}_{r=1}^R$, along with the estimated factor matrix \(\hat{\mathbf{B}}\) and the nonsingular matrix \(\mathbf{M}\) can be retrieved from the eigenvalue decomposition (EVD) \(\hat{\mathbf{Z}} =\mathbf{ M} \mathbf{Z}_b \mathbf{M}^{-1}\), subject to permutation ambiguities. Next, we proceed to separate factor matrix $\mathbf{A}$ and factor matrix $\mathbf{C}$ from the row space
\begin{equation}
\left(\mathbf{A}\odot\mathbf{C}^{(k_2)}\right)_{[:,r]} = \frac{1}{k_1}\left(\mathbf{I}_{Q_\text{MS}}\otimes{\mathbf{b}}_r^H\otimes\mathbf{I}_{k_2}\right)\mathbf{U}\mathbf{m}_r.
\end{equation}
Notice that there also exist Vandermonde structure in $\mathbf{C}^{(k_2)}$, thus the following property similar to (\ref{pinghuayihang}) can be derived
\begin{equation}
\mathbf{A}\odot\overline{\mathbf{C}}^{(k_2)}=\mathbf{A}\odot\underline{\mathbf{C}}^{(k_2)}\mathbf{Z}_c .
\end{equation}
In a similar manner, the estimation of factor matrix $\hat{\mathbf{C}}$ with Vandermonde structure can be obtained, by simply setting
\begin{equation}
    \begin{aligned}
       &\mathbf{U}_{3 \ [(i-1)Q_\text{MS}+1: (i-1)Q_\text{MS}+k_2,\ :]}  \\&\quad\quad\quad=\left(\mathbf{A}_{[i,:]}\odot\mathbf{C}^{(k_2)}\right)_{[(i-1)k_2+1:ik_2-1,\ :]}\in \mathbb{C}^{(k_2-1)\times R},\\
    &\mathbf{U}_{4 \ [(i-1)Q_\text{MS}+1: (i-1)Q_\text{MS}+k_2,\ :]} 
    \\&\quad\quad\quad=\left(\mathbf{A}_{[i,:]}\odot\mathbf{C}^{(k_2)}\right)_{[(i-1)k_2+2:ik_2,\ :]}\in \mathbb{C}^{(k_2-1)\times R}.
    \label{30}
\end{aligned}
\end{equation}
Thus, this leads to
\begin{equation}
\begin{aligned}
        \mathbf{U}_4&=\mathbf{A}\odot \overline{\mathbf{C}}^{(k_2)}= \mathbf{A}\odot \underline{\mathbf{C}}^{(k_2)}\mathbf{Z}_c = \mathbf{U}_3\mathbf{Z}_c.
    \label{30}
\end{aligned}
\end{equation}
By computing the phase angles of the diagonal elements of $\mathbf{U}_3^\dagger\mathbf{U}_4$, the Vandermonde generators $\{z_{c,r}\}_{r=1}^R$ of matrix $\mathbf{C}$ can be also obtained. Owing to the inherent scaling ambiguity of tensor decomposition, it is reasonable to assume that the column vectors of $\mathbf{B}^{(k_1)}$ and $\mathbf{C}^{(k_2)}$ are normalized to unit length without loss of generality. Thus the estimation of factor matrix $\mathbf{A}$ can be given by
\begin{equation}
    \hat{\mathbf{a}}_r = \left((\hat{\mathbf{b}}^{(k_1)}_r)^H \otimes (\hat{\mathbf{c}}_r^{(k_2)})^H \otimes\mathbf{I}_{Q_\text{MS}}\right)\mathbf{Um}_r,
    \label{33}
\end{equation}
where $\hat{\mathbf{a}}_r$ and $\mathbf{m}_r$ denote the $r$-th column of matrices $\hat{\mathbf{A}}$ and $\mathbf{M}$. Since $\left(\mathbf{B}^{(l_1)} \odot \mathbf{C}^{(l_2)} \odot \mathbf{D}\right)$ forms the column space of matrix $\mathbf{X}$, we can obtain through singular value decomposition
\begin{equation}
    \mathbf{V}^* \boldsymbol{\Sigma} \mathbf{N} = \mathbf{B}^{(l_1)} \odot \mathbf{C}^{(l_2)} \odot \mathbf{D}, \quad \mathbf{N} = \mathbf{M}^{-T}.
    \label{liekongjian}
\end{equation}

From the previous steps, we have obtained estimates of matrices $\mathbf{A}$, $\mathbf{B}$, and $\mathbf{C}$. Therefore, the estimate of matrix $\mathbf{D}$ can be derived through (\ref{liekongjian}) as follows
\begin{equation}
   \hat{\mathbf{d}}_r = \left(\frac{(\hat{\mathbf{b}}^{(l_1)}_r)^H}{(\hat{\mathbf{b}}^{(l_1)}_r)^H \hat{\mathbf{b}}^{(l_1)}_r} \otimes \frac{(\hat{\mathbf{c}}^{(l_2)}_r)^H}{(\hat{\mathbf{c}}^{(l_2)}_r)^H \hat{\mathbf{c}}^{(l_2)}_r} \otimes\mathbf{I}_{N_s}\right)\mathbf{V}^* \boldsymbol{\Sigma} \mathbf{n}_r,
   \label{35}
\end{equation}
where $\hat{\mathbf{d}}_r$ and $\mathbf{n}_r$ denote the $r$-th column of the matrices of $\mathbf{D}$ and $\mathbf{N}$. We have to note that if $\left(\mathbf{C}^H\mathbf{C}\odot \mathbf{B}^H\mathbf{B} \odot\mathbf{A}^H\mathbf{A}\right)^{-1}$ exists, the result of $\mathbf{D}$ can be acquired directly using the least squares method instead of singular value decomposition after obtaining the estimates of matrices $\mathbf{A}$, $\mathbf{B}$, and $\mathbf{C}$. The computational expression is given by
\begin{equation}
\begin{aligned}
        \mathbf{D}=&\left(\mathbf{C}\odot \mathbf{B} \odot\mathbf{A}\right)^{H}\\
        &\times \left(\left(\mathbf{C}\odot \mathbf{B} \odot\mathbf{A}\right)\left(\mathbf{C}\odot \mathbf{B} \odot\mathbf{A}\right)^H\right)^{-1}\mathbf{Y}_{(4)}.
        \label{DLS}
\end{aligned}
\end{equation}
Due to the properties of the Khatri-Rao product, (\ref{DLS}) can be further simplified to
\begin{equation}
\begin{aligned}
        \mathbf{D}=&\left(\mathbf{C}\odot \mathbf{B}\odot\mathbf{A}\right)^{H}\\
        &\quad\quad\quad\times \left(\mathbf{C}^H\mathbf{C}\odot \mathbf{B}^H\mathbf{B} \odot\mathbf{A}^H\mathbf{A}\right)^{-1}\mathbf{Y}_{(4)}.
\end{aligned}
\end{equation}
Through the aforementioned operations, we have achieved the factor matrix decomposition of the tensor signal. SCPD algorithm is summarized in \textbf{Algorithm} \ref{scpdweidaima}. By utilizing the factor matrices for parameter extraction, the parameter coupling issue inherent in joint estimation of multiple parameters is avoided, thereby enhancing computational accuracy while significantly reducing computational complexity. We next present the uniqueness condition for the tensor decomposition and examine whether the tensor signal in our model satisfies this condition for unique CP decomposition.
\begin{algorithm}
\caption{SCPD Algorithm for Tensor Decomposition}
\renewcommand{\algorithmicrequire}{\textbf{INPUT:}}
\renewcommand{\algorithmicensure}{\textbf{OUTPUT:}} 
\begin{algorithmic}[1]
\REQUIRE $\mathbf{Y}^{[3]}= (\mathbf{A} \odot \mathbf{B}\odot \mathbf{C})\mathbf{D}^T$ 
\\and $\mathbf{Y}_{(4)} = \mathbf{D}\left(\mathbf{C}\odot\mathbf{B}\odot\mathbf{A}\right)^T$.
\STATE Choose pair $\left(k_1,l_1\right)$ and $\left(k_2,l_2\right)$ subject to $K+1 = l_1+k_1$ and $M+1 = l_2+k_2$ to process spatial smoothing technique$$\mathbf{X}^{[3]} = \left(\mathbf{A} \odot \mathbf{B}^{(k_1)} \odot \mathbf{C}^{(k_2)}\right) \left(\mathbf{B}^{(l_1)} \odot \mathbf{C}^{(l_2)} \odot \mathbf{D}\right)^T$$
\STATE Compute SVD $\mathbf{X}^{[3]} = \mathbf{U} \boldsymbol{\Sigma} \mathbf{V}^H$.
\STATE Estimate $R$ from $\boldsymbol{\Sigma}$.
\STATE Build $\mathbf{U}_1 = \mathbf{U}_{[(i-1)k_1k_2+1:(i-1)k_1k_2+(k_1-1)k_2,\ :]}$,  
$\mathbf{U}_2 = \mathbf{U}_{[(i-1)k_1k_2+k_2+1:(i-1)k_1k_2+k_1k_2,\ :]}.$
\STATE Compute $\mathbf{U}_1^\dagger\mathbf{U}_2 = \mathbf{M} \mathbf{Z}_b \mathbf{M}^{-1}$ and determine the Vandermonde generator $\{z_{b,r}\}$ and nonsingular matrix $\mathbf{M}$.
\STATE Build $\hat{\mathbf{b}}_r = \left[1, z_{b,r}, z_{b,r}^2, \cdots, z_{b,r}^{K-1}\right]^T$.
\STATE Compute $\{z_{c,r}\}$ and build $\hat{\mathbf{c}}_r$ via a procedure similar to step 4-6.
\STATE Compute $\hat{\mathbf{a}}_r = \left((\hat{\mathbf{b}}^{(k_1)}_r)^H \otimes (\hat{\mathbf{c}}_r^{(k_2)})^H \otimes\mathbf{I}_{Q_\text{MS}}\right)\mathbf{Um}_r$.
\STATE         \If{$\left(\mathbf{C}^H\mathbf{C}\odot\mathbf{B}^H\mathbf{B}\odot\mathbf{A}^H\mathbf{A}\right)^{-1}$ exists}{
Compute $\hat{\mathbf{D}}$ by \\$\hat{\mathbf{D}} = \left(\mathbf{C}\odot \mathbf{B}\odot\mathbf{A}\right)^{H}\left(\mathbf{C}^H\mathbf{C}\odot \mathbf{B}^H\mathbf{B} \odot\mathbf{A}^H\!\mathbf{A}\right)^{-1}\!\mathbf{Y}_{(4)}$.\\
\textbf{else}\\
Compute $\mathbf{N} = \mathbf{M}^{-T}$.\\
Compute $\hat{\mathbf{d}}_r$ by \\$\hat{\mathbf{d}}_r = \left(\frac{(\hat{\mathbf{b}}^{(l_1)}_r)^H}{(\hat{\mathbf{b}}^{(l_1)}_r)^H \hat{\mathbf{b}}^{(l_1)}_r} \otimes \frac{(\hat{\mathbf{c}}^{(l_2)}_r)^H}{(\hat{\mathbf{c}}^{(l_2)}_r)^H \hat{\mathbf{c}}^{(l_2)}_r} \otimes\mathbf{I}_{N_s}\right)\mathbf{V}^* \boldsymbol{\Sigma} \mathbf{n}_r$.
        }
\ENSURE Factor marices $\hat{\mathbf{A}}$, $\hat{\mathbf{B}}$, $\hat{\mathbf{C}}$ and $\hat{\mathbf{D}}$.
\end{algorithmic}
\label{scpdweidaima}
\end{algorithm}

\subsubsection{Analysis of Uniqueness}
This subsection is concerned with the fundamental question of uniqueness conditions regarding the representation of an $n$-th order tensor by an aggregate of $R$ rank-$1$ entities. It is significant for estimating the channel, as it ensures that the decomposed factor matrices comprehensively match all the information of unknown parameters. 

The uniqueness of the tensor decomposition $\mathcal{X} = [\![\mathbf{A}^{(1)}, \ldots, \mathbf{A}^{(n)}]\!]$, where $\mathbf{A}^{(i)}\in\mathbb{C}^{I_i\times R}, i\in\{ 1, \cdots, n\}$ is defined up to permutation and scaling ambiguities. Specifically, for any alternative representation $[\![\mathbf{B}^{(1)}, \ldots, \mathbf{B}^{(n)}]\!]$, the factor matrices must satisfy $\mathbf{B}^{(j)} = \mathbf{A}^{(j)} \mathbf{\Pi} \mathbf{\Lambda}_j$ for $j = 1, \ldots, n$, where $\mathbf{\Pi}$ denotes an $R \times R$ permutation matrix and $\mathbf{\Lambda}_j$ are nonsingular diagonal matrices fulfilling the constraint $\prod_{j=1}^{n} \mathbf{\Lambda}_j = \mathbf{I}_R$. A classical uniqueness result can be stated as follows.
\begin{lemma}
    Let \( \mathcal{X} \in \mathbb{C}^{I_1 \times I_2 \times I_3} \) be a tensor with matrix representation of mode-n unfolding. If
\begin{equation}
    k(\mathbf{A}^{(1)}) + k(\mathbf{A}^{(2)}) + k(\mathbf{A}^{(3)}) \geq 2R + 2,
    \label{(5)}
\end{equation}
the rank of \( \mathcal{X} \) is \( R \) and the CP decomposition of \( \mathcal{X} \) is unique. 
\end{lemma}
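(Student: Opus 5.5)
This is Kruskal's theorem. I would prove it via Kruskal's permutation lemma, following Stegeman–Sidiropoulos. Write $\mathbf{A}=\mathbf{A}^{(1)}$, $\mathbf{B}=\mathbf{A}^{(2)}$, $\mathbf{C}=\mathbf{A}^{(3)}$, so $\mathcal{X}=[\![\mathbf{A},\mathbf{B},\mathbf{C}]\!]$ with $R$ columns. Suppose $\mathcal{X}=[\![\bar{\mathbf{A}},\bar{\mathbf{B}},\bar{\mathbf{C}}]\!]$ is another decomposition with $\bar R\le R$ terms; pad with zero columns if necessary. The mode-3 frontal slices give
\begin{equation*}
\mathbf{A}\,\mathrm{diag}(\mathbf{C}_{[k,:]})\,\mathbf{B}^T=\bar{\mathbf{A}}\,\mathrm{diag}(\bar{\mathbf{C}}_{[k,:]})\,\bar{\mathbf{B}}^T
\end{equation*}
for every $k$, and hence the same identity with $\mathrm{diag}(\mathbf{x}^T\mathbf{C})$ and $\mathrm{diag}(\mathbf{x}^T\bar{\mathbf{C}})$ for every $\mathbf{x}\in\mathbb{C}^{I_3}$.

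Since $k(\mathbf{A}),k(\mathbf{B})\le R$, the hypothesis forces $k(\mathbf{C})\ge 2$, and symmetrically $k(\mathbf{A}),k(\mathbf{B})\ge 2$. I will show $\bar{\mathbf{C}}=\mathbf{C}\mathbf{\Pi}\mathbf{\Lambda}_3$ using the permutation lemma. That lemma says: if $k(\mathbf{C})\ge 2$ and, for every $\mathbf{x}$, $\omega(\bar{\mathbf{C}}^T\mathbf{x})\le R-r_C+1$ implies $\omega(\mathbf{C}^T\mathbf{x})\le\omega(\bar{\mathbf{C}}^T\mathbf{x})$, then $\bar{\mathbf{C}}=\mathbf{C}\mathbf{\Pi}\mathbf{\Lambda}$. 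Here $\omega$ counts nonzero entries and $r_C=\mathrm{rank}(\mathbf{C})$.

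To verify this hypothesis, take $\mathbf{x}$ with $\omega(\bar{\mathbf{C}}^T\mathbf{x})=w$. The right-hand slice has rank at most $w$. The left-hand slice is $\mathbf{A}_S\,\mathrm{diag}(\cdot)\,\mathbf{B}_S^T$, where $S$ indexes the $\omega(\mathbf{C}^T\mathbf{x})$ nonzero weights. By Sylvester's inequality together with the Kruskal ranks, its rank is at least $\min(|S|,k(\mathbf{A}))+\min(|S|,k(\mathbf{B}))-|S|$. Combining this with $w\le R-r_C+1\le R-k(\mathbf{C})+1\le k(\mathbf{A})+k(\mathbf{B})-R-1$, which follows from the hypothesis, a case analysis on whether $|S|$ exceeds $k(\mathbf{A})$ or $k(\mathbf{B})$ yields $|S|\le w$. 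This rank-counting step is the main obstacle. The case where $|S|$ is large must be excluded by contradiction using exactly the $2R+2$ margin.

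By symmetry the same argument gives $\bar{\mathbf{A}}=\mathbf{A}\mathbf{\Pi}_1\mathbf{\Lambda}_1$ and $\bar{\mathbf{B}}=\mathbf{B}\mathbf{\Pi}_2\mathbf{\Lambda}_2$. Because $\bar{\mathbf{C}}$ then has $R$ nonzero columns, no decomposition with fewer than $R$ terms exists, so $\mathrm{rank}(\mathcal{X})=R$.

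Finally I must show the three permutations coincide and that the scalings multiply to $\mathbf{I}$. Substitute into the mode-1 unfolding to get $(\mathbf{C}\odot\mathbf{B})\mathbf{A}^T=(\bar{\mathbf{C}}\odot\bar{\mathbf{B}})\bar{\mathbf{A}}^T$. Using $k(\mathbf{C}\odot\mathbf{B})\ge\min(k(\mathbf{C})+k(\mathbf{B})-1,R)$ and comparing rank-one terms column by column forces $\mathbf{\Pi}_1=\mathbf{\Pi}_2=\mathbf{\Pi}$ and $\mathbf{\Lambda}_1\mathbf{\Lambda}_2\mathbf{\Lambda}_3=\mathbf{I}$.
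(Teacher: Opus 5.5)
The paper does not prove this lemma; it only cites Kruskal (1977) and a follow-up reference. Your permutation-lemma argument (the Stegeman--Sidiropoulos route) therefore has to stand on its own, and most of it does:
\begin{itemize}
\item Your form of the permutation lemma, stated with $r_C$, follows from the form stated with $r_{\bar{C}}$. The instances of your hypothesis with $\omega(\bar{\mathbf{C}}^T\mathbf{x})=0$ give $\mathrm{range}(\mathbf{C})\subseteq\mathrm{range}(\bar{\mathbf{C}})$, hence $r_C\le r_{\bar{C}}$.
\item The rank-counting case analysis closes. If $k_A<|S|\le k_B$ you get $k_A\le w\le k_A+k_B-R-1$, i.e.\ $k_B\ge R+1$. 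If $|S|>\max(k_A,k_B)$ you get $w\ge k_A+k_B-|S|\ge k_A+k_B-R$. Both contradict the bound on $w$.
\item The padding argument does give $\mathrm{rank}(\mathcal{X})=R$.
\end{itemize}

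The gap is the last step. The three applications of the permutation lemma give a priori unrelated permutations. So $\bar{\mathbf{C}}\odot\bar{\mathbf{B}}$ has columns $\mathbf{c}_{\pi_3(r)}\otimes\mathbf{b}_{\pi_2(r)}$, which are not columns of $\mathbf{C}\odot\mathbf{B}$ when $\pi_2(r)\neq\pi_3(r)$. The bound $k(\mathbf{C}\odot\mathbf{B})\ge\min(k_C+k_B-1,R)$ says nothing about linear relations between such mismatched Kronecker vectors and matched ones. Nor can you compare ``column by column'' in $(\mathbf{C}\odot\mathbf{B})\mathbf{A}^T=(\bar{\mathbf{C}}\odot\bar{\mathbf{B}})\bar{\mathbf{A}}^T$. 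That requires cancelling $\mathbf{A}^T$, i.e.\ $\mathbf{A}$ of full column rank, which Kruskal's condition does not give. For example, generic factors with $I_1=I_2=I_3=7$ and $R=9$ satisfy $21\ge 20$, yet no factor has full column rank.

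The missing idea is to reuse your own slice argument to align the permutations.
\begin{enumerate}
\item Reorder and rescale the alternative decomposition so that $\bar{\mathbf{C}}=\mathbf{C}$. Write $\bar{\mathbf{a}}_j=\alpha_j\mathbf{a}_{\pi_1(j)}$ and $\bar{\mathbf{b}}_j=\beta_j\mathbf{b}_{\pi_2(j)}$.
\item Fix $r$ and any set $S$ of $k_C-1$ indices with $r\notin S$. Those $k_C$ columns of $\mathbf{C}$ are independent, so there is $\mathbf{x}$ with $\mathbf{x}^T\mathbf{c}_j=0$ for $j\in S$ and $\mathbf{x}^T\mathbf{c}_r\neq 0$.
\item The support $T$ of $\mathbf{C}^T\mathbf{x}$ contains $r$ and satisfies $|T|\le R-k_C+1\le\min(k_A,k_B)-1$. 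Hence in $\mathbf{A}_T\mathbf{D}\mathbf{B}_T^T=\bar{\mathbf{A}}_T\mathbf{D}\bar{\mathbf{B}}_T^T$ the left side has column space $\mathrm{span}\{\mathbf{a}_j\}_{j\in T}$ of dimension $|T|$. This space must lie in $\mathrm{span}\{\mathbf{a}_{\pi_1(j)}\}_{j\in T}$.
\item Any $|T|+1\le k_A$ columns of $\mathbf{A}$ are independent, so this forces $\pi_1(T)=T$. The same reasoning on the row space gives $\pi_2(T)=T$.
\item For any $j\neq r$ you can choose $S$ to contain $j$, since $k_C\ge 2$. Intersecting these invariant sets over all such $S$ leaves $\{r\}$, so $\pi_1=\pi_2=\mathrm{id}$.
\end{enumerate}
Only now does the Khatri--Rao bound do its job. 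The mode-3 unfolding gives $\mathbf{C}(\mathbf{I}-\mathbf{\Lambda})(\mathbf{B}\odot\mathbf{A})^T=\mathbf{0}$ with $\mathbf{\Lambda}=\mathrm{diag}(\alpha_j\beta_j)$. Here $\mathbf{B}\odot\mathbf{A}$ has full column rank because $k_A+k_B-1\ge R+1$, and $\mathbf{C}$ has no zero columns, so $\mathbf{\Lambda}=\mathbf{I}$. This is exactly $\mathbf{\Lambda}_1\mathbf{\Lambda}_2\mathbf{\Lambda}_3=\mathbf{I}$ after your normalization.
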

Detail information of this unique condition can be found in \cite{doi:10.1137/100814615}, \cite{KRUSKAL197795}. Moreover, in the generic\footnote{A CP decomposition is called generic if it holds with probability one when the entries of the factor matrices are drawn from absolutely continuous probability density functions.} case condition (\ref{(5)}) becomes
\begin{equation}
    \min(I_1, R) + \min(I_2, R) + \min(I_3, R) \geq 2R + 2.
    \label{(6)}
\end{equation}
Note that (\ref{(6)}) is a sufficient condition for uniqueness up to permutation and scaling. This uniqueness condition was generalized to \( n \geq 3 \) in \cite{4190a390a1ba11ddb6ae000ea68e967b}, which can be given as follows
\begin{equation}
     \sum_{j=1}^{n} k({\mathbf{A}^{(j)}})\geq 2R + (n - 1),
    \label{ur4}
\end{equation}
The comparison between (\ref{(5)}) and (\ref{ur4}) reveals that the uniqueness criterion grows more relaxed with increasing tensor order. Specifically, incrementing the order by one introduces an additional $k$-rank term to the right-hand side of (\ref{ur4}), whereas the left-hand side increases merely by one\cite{6573422}. This property can be leveraged to facilitate the proof of uniqueness in our proposed tensor decomposition algorithm. In the generic case, condition (\ref{ur4}) becomes
\begin{equation}
    \quad 2R + (n - 1) \leq \sum_{j=1}^{n} \min (\mathbf{A}^{(j)}, R).
    \label{ur4m}
\end{equation}
It should be supplemented that (\ref{ur4m}) establishes a sufficient criterion for ensuring the uniqueness of CP decomposition, which means that given specific numbers of RF chains, OFDM symbols, subcarriers and time slots in the proposed mmWave MA MIMO system, the maximum number of identifiable paths $R$ is determined\cite{10403776}. 

For the case where some factor matrices are Vandermonde, we obtain a uniqueness result that is more relaxed than (\ref{ur4m}). This result is given below and the proof of the deterministic part is a Khatri-Rao variant of the derivation of ESPRIT \cite{4805850}. The proof is based on the following lemma.

\begin{lemma}
    Let \( \mathbf{A}^{(1)} \in \mathbb{C}^{I_1 \times R} \) be a Vandermonde matrix and let \( \mathbf{A}^{(2)} \in \mathbb{C}^{I_2 \times R} \), then the matrix \( \mathbf{A}^{(1)} \odot \mathbf{A}^{(2)} \) generically has a rank of \( \min(I_1 I_2, R) \).
    \label{van}
\end{lemma}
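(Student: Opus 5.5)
The standard route is to show that $\operatorname{rank}(\mathbf{A}^{(1)}\odot\mathbf{A}^{(2)})$ attains $\min(I_1I_2,R)$ at one explicit choice of generators and entries, and then lift this to the generic statement by a polynomial (Zariski) argument. Parametrize $\mathbf{A}^{(1)}$ by its generators $z_1,\dots,z_R$, so that $\mathbf{A}^{(1)}_{[i,r]}=z_r^{i-1}$, and let $\mathbf{A}^{(2)}$ have free entries. Then every $\min(I_1I_2,R)$-minor of $\mathbf{A}^{(1)}\odot\mathbf{A}^{(2)}$ is a polynomial in $(z_1,\dots,z_R,\mathbf{A}^{(2)})$. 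If one such minor is not identically zero, its zero set has Lebesgue measure zero. Hence, for parameters drawn from any absolutely continuous density, the rank equals $\min(I_1I_2,R)$ with probability one, which is exactly the notion of "generic" used in the footnote above. The upper bound $\operatorname{rank}\le\min(I_1I_2,R)$ is trivial from the matrix dimensions, so only the lower bound needs proof.

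For the lower bound it suffices to exhibit a single point. First assume $R\le I_1I_2$. Split the $R$ columns into at most $I_2$ groups of size at most $I_1$, and label group $j$ by $\mathcal{S}_j$. Choose $\mathbf{A}^{(2)}$ so that column $r\in\mathcal{S}_j$ equals the $j$-th unit vector $\mathbf{e}_j$. Choose all generators $z_r$ distinct.

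The column of the Khatri--Rao product for $r\in\mathcal{S}_j$ is then $\mathbf{e}_j\otimes\mathbf{a}^{(1)}_r$, or $\mathbf{a}^{(1)}_r\otimes\mathbf{e}_j$ under the other ordering convention; either way the argument is the same up to a row permutation. After permuting rows, the product is block diagonal. The $j$-th block is the $I_1\times|\mathcal{S}_j|$ Vandermonde matrix with distinct generators $\{z_r\}_{r\in\mathcal{S}_j}$. Since $|\mathcal{S}_j|\le I_1$, each block has full column rank, and therefore the whole matrix has rank $R$.

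If $R>I_1I_2$, apply the same construction to a subset of $I_1I_2$ columns, with exactly $I_1$ columns per group. This gives a nonvanishing $I_1I_2\times I_1I_2$ minor, i.e.\ rank $I_1I_2$. The remaining columns are arbitrary, since adding columns cannot lower the rank.

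The main subtlety is that the generic claim must be stated over the correct parameter space. The Vandermonde matrix is random only through its generators, not through independent entries. This is why the argument parametrizes by $(z_r)$ and why the witness point must itself be Vandermonde. The construction above respects this, because all it requires is distinct generators. One must also check that the minor's polynomial is nonzero as a function on $\mathbb{C}^R\times\mathbb{C}^{I_2\times R}$ rather than merely at the witness point; this follows immediately because a polynomial nonzero at one point is not the zero polynomial. If desired, choosing the witness $z_r$ on the unit circle, as in the model for $\mathbf{B}$ and $\mathbf{C}$, shows that the result also holds generically when the generators are restricted to unit modulus. The zero set of a nonzero real-analytic function of the phases still has measure zero.
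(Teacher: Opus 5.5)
Your proof is correct. Its skeleton matches the paper's: regard a maximal minor of $\mathbf{A}^{(1)}\odot\mathbf{A}^{(2)}$ as a polynomial, then exhibit one point where it does not vanish. The difference is the witness.

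The paper takes $\mathbf{A}^{(2)}$ itself to be Vandermonde with distinct generators $a_{(2),r}$ and sets $a_{(1),r}=a_{(2),r}^{I_2}$. Column $r$ of the product then has entries $a_{(2),r}^{(i-1)I_2+(j-1)}$, so the whole Khatri--Rao product collapses into a single $I_1I_2\times R$ Vandermonde matrix. That matrix has rank (indeed $k$-rank) $\min(I_1I_2,R)$ in both regimes at once.

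You instead give $\mathbf{A}^{(2)}$ unit-vector columns, which makes the product block diagonal with Vandermonde blocks. This needs only distinct generators and no algebraic coupling, at the cost of handling $R\le I_1I_2$ and $R>I_1I_2$ separately. Your version is more elementary and cleaner than the paper's loose ``reduce to $I_1I_2=R$'' step. Your parametrization (generators of $\mathbf{A}^{(1)}$, free entries of $\mathbf{A}^{(2)}$) also fits the statement better than the paper's ``polynomial in all the Vandermonde generators.''

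What the paper's choice buys is that its witness stays inside the class where $\mathbf{A}^{(2)}$ is also Vandermonde, and of unit modulus when the generators lie on the unit circle. The same argument therefore gives generic full rank for Khatri--Rao products of Vandermonde factors with independent generators, such as $\mathbf{A}^{(2)(k_1)}\odot\mathbf{A}^{(3)(k_2)}$ in the subsequent Proposition. Your witness, with its zero entries, is not admissible once the second factor is constrained to be Vandermonde.

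The Proposition also invokes the lemma to get $k(\cdot)=r(\cdot)$. Your argument yields this too, since your construction works for every column subset and a finite union of null sets is null.
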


\begin{proof}
A reduction of the general case to the situation \( I_1I_2 = R \) is reasonable. For \( I_1I_2 \leq R \), proving the result for an \textit{arbitrary selection} of \( IJ \) columns suffices. For \( I_1I_2 \geq R \), it is enough to verify the result for any row‑reduced square submatrix. Thus, full rank and full $k$-rank can be established by showing that the determinant of $\mathbf{A}^{(1)}\odot\mathbf{A}^{(2)}$ is nonzero. We interpret $\det(\mathbf{A}\odot\mathbf{B})$ as a polynomial function $H$ in all the Vandermonde generators. The polynomial $H$ is an analytic function. If one can prove that $H$ is not identically zero, then it follows from the fact that the zero set of a non-zero analytic function has measure zero that $H \neq 0$ holds almost everywhere. Therefore, we only need to find a specific set of generator values such that $\mathbf{A}\odot\mathbf{B}$ has full rank, which then demonstrates that $H$ is not a trivial polynomial. When the generators $a_{(1)}$ of $\mathbf{A}^{(1)}$ and $a_{(2)}$ of $\mathbf{A}^{(2)}$ satisfy $a_{(1)} = a_{(2)}^{I_2}$, $\mathbf{A}\odot\mathbf{B}$ is a Vandermonde matrix and hence $H \neq 0$.
\end{proof}
\begin{proposition}
    Let \( \mathcal{X} \in \mathbb{C}^{I_1 \times I_2 \times I_3\times I_4} \) be a tensor with matrix representation \( \mathbf{X}^{[4]} = (\mathbf{A}^{(1)} \odot \mathbf{A}^{(2)}\odot \mathbf{A}^{(3)})\mathbf{A}^{(4)\ T} \), where \( \mathbf{A}^{(2)} \), \( \mathbf{A}^{(3)} \) are Vandermonde matrices with distinct generators. Consider the matrix representation \(\mathbf{X} = \left(\mathbf{A}^{(1)} \odot \mathbf{A}^{(2)(k_1)}\odot \mathbf{A}^{(3)(k_2)}\right) \left(\mathbf{A}^{(2)(l_1)} \odot \mathbf{A}^{(3)(l_2)}\odot \mathbf{A}^{(4)}\right)^T\) with \(K_1 + L_1 = I_2 + 1\) and \(K_2 + L_2 = I_3 + 1\). If
\begin{equation}
\begin{cases}
    r\left(\underline{\mathbf{A}}^{(2)\ (l_1)} \odot {\mathbf{A}}^{(3)\ (l_2)} \odot \mathbf{A}^{(4)}\right) = R, \\
    r\left(\mathbf{A}^{(1)} \odot \mathbf{A}^{(2)\ (k_1)} \odot \mathbf{A}^{(3)\ (k_2)}\right) = R,
\end{cases}
\label{man!}
\end{equation}
for $I_2+1 = k_1+l_1$ and $I_3+1 = k_2+l_2$, $R = r(\mathcal{X})$ and the Vandermonde constrained CP decomposition of $\mathcal{X}$ is unique. 
\end{proposition}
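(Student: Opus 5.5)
The plan is to turn the ESPRIT-type derivation of the SCPD algorithm into a uniqueness argument. This is the Khatri--Rao variant of the ESPRIT proof in \cite{4805850}, extended from one Vandermonde factor to two. Write $\mathbf{E}=\mathbf{A}^{(1)} \odot \mathbf{A}^{(2)(k_1)}\odot \mathbf{A}^{(3)(k_2)}$ and $\mathbf{F}=\mathbf{A}^{(2)(l_1)} \odot \mathbf{A}^{(3)(l_2)}\odot \mathbf{A}^{(4)}$, so that $\mathbf{X}=\mathbf{E}\mathbf{F}^T$.

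\textbf{Step 1: the rank of $\mathcal{X}$.} By the second condition in (\ref{man!}), $\mathbf{E}$ has full column rank $R$. The matrix $\underline{\mathbf{A}}^{(2)(l_1)}\odot\mathbf{A}^{(3)(l_2)}\odot\mathbf{A}^{(4)}$ is a row subset of $\mathbf{F}$. By the first condition it has rank $R$, so $\mathbf{F}$ also has full column rank. Hence $r(\mathbf{X})=R$, and $r(\mathcal{X})\ge R$ because any matrix unfolding rank lower-bounds the tensor rank. The given decomposition has $R$ terms, so $r(\mathcal{X})=R$.

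\textbf{Step 2: uniqueness, given an alternative decomposition.} Suppose $[\![\tilde{\mathbf{A}}^{(1)},\dots,\tilde{\mathbf{A}}^{(4)}]\!]$ is another $R$-term decomposition with Vandermonde $\tilde{\mathbf{A}}^{(2)},\tilde{\mathbf{A}}^{(3)}$. It gives $\mathbf{X}=\tilde{\mathbf{E}}\tilde{\mathbf{F}}^T$. Since $r(\mathbf{X})=R$, both $\tilde{\mathbf{E}}$ and $\tilde{\mathbf{F}}$ have rank $R$, and $\mathrm{col}(\tilde{\mathbf{F}})=\mathrm{col}(\mathbf{F})$. Hence $\tilde{\mathbf{F}}=\mathbf{F}\mathbf{M}$ for some nonsingular $\mathbf{M}$.

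\textbf{Step 3: recovering the generators of $\mathbf{A}^{(2)}$.} I would work in the row space, that is, with $\mathbf{F}$, rather than with $\mathbf{U}$ as in the algorithm. This is why the first condition is stated for the row-truncated $\underline{\mathbf{A}}^{(2)(l_1)}$. The Vandermonde shift gives
$\overline{\mathbf{F}}_2=\underline{\mathbf{F}}_2\mathbf{Z}_2$,
where the subscript indicates the shift is taken along the $\mathbf{A}^{(2)}$ index and $\mathbf{Z}_2=\mathrm{diag}$ of the generators. The same relation holds for $\tilde{\mathbf{F}}$ with $\tilde{\mathbf{Z}}_2$. Substituting $\tilde{\mathbf{F}}=\mathbf{F}\mathbf{M}$ gives $\underline{\mathbf{F}}_2\mathbf{Z}_2\mathbf{M}=\underline{\mathbf{F}}_2\mathbf{M}\tilde{\mathbf{Z}}_2$. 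By the first condition, $\underline{\mathbf{F}}_2$ has full column rank, so it can be cancelled:
$\mathbf{M}^{-1}\mathbf{Z}_2\mathbf{M}=\tilde{\mathbf{Z}}_2$.
Since the generators of $\mathbf{A}^{(2)}$ are distinct, the eigenvectors are unique up to scaling. Therefore $\mathbf{M}=\mathbf{\Pi}\mathbf{\Lambda}$ and $\tilde{\mathbf{Z}}_2=\mathbf{\Pi}^T\mathbf{Z}_2\mathbf{\Pi}$. This already yields $\tilde{\mathbf{A}}^{(2)}=\mathbf{A}^{(2)}\mathbf{\Pi}$, because a Vandermonde matrix is fixed by its generators.

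\textbf{Step 4: the remaining factors.} Now $\tilde{\mathbf{F}}=\mathbf{F}\mathbf{\Pi}\mathbf{\Lambda}$ holds column by column. A rank-one Khatri--Rao column $\mathbf{u}\otimes\mathbf{v}\otimes\mathbf{w}$ determines its factors up to scaling. This gives $\mathbf{A}^{(3)}$, whose Vandermonde structure (first entry 1) fixes the scaling, and it gives $\mathbf{A}^{(4)}$ up to diagonal scaling. Then $\tilde{\mathbf{E}}=\mathbf{E}\mathbf{M}^{-T}$ follows from $\mathbf{E}\mathbf{F}^T=\tilde{\mathbf{E}}\tilde{\mathbf{F}}^T$ together with the full rank of $\mathbf{F}$. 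Extracting the columns of $\mathbf{E}$ in the same way recovers $\mathbf{A}^{(1)}$ up to the compensating scaling, so $\prod_j\mathbf{\Lambda}_j=\mathbf{I}$.

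\textbf{Main obstacle.} The delicate step is the cancellation in Step 3. It needs the shifted submatrix, not just $\mathbf{F}$, to have full column rank, and this is exactly what the first condition in (\ref{man!}) supplies. The other point to check is that the shift-invariance equation for $\tilde{\mathbf{F}}$ uses the same row selection as for $\mathbf{F}$. This holds because both are built with identical smoothing indices $(k_1,l_1)$ and $(k_2,l_2)$. For generic data, Lemma \ref{van} would then show that these rank conditions hold whenever the dimensions allow it.
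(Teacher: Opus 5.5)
Your argument is correct in substance, but it takes a different route from the paper.

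\textbf{The paper's route.} The paper smooths once more along mode 2, splitting $\mathbf{A}^{(2)(l_1)}$ into $\mathbf{A}^{(2)(2)}$ and $\mathbf{A}^{(2)(l_1-1)}$. This lets it read $\mathbf{X}$ as a third-order CPD with three factors: $\mathbf{E}$, the $2\times R$ Vandermonde matrix $\mathbf{A}^{(2)(2)}$, and $\mathbf{A}^{(2)(l_1-1)}\odot\mathbf{A}^{(3)(l_2)}\odot\mathbf{A}^{(4)}$, which is your $\underline{\mathbf{F}}_2$. The outer factors have $k$-rank $R$ by the two hypotheses, and the middle one has $k$-rank $2$ because the generators are distinct. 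Kruskal's bound $R+2+R\ge 2R+2$ then gives uniqueness of that third-order CPD. The paper afterwards sketches an SVD/ESPRIT recovery of $\mathbf{A}^{(2)},\mathbf{A}^{(3)}$ and a Kronecker-factor extraction of $\mathbf{A}^{(1)},\mathbf{A}^{(4)}$.

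\textbf{Your route and what each buys.} You compare directly with an arbitrary VDM-constrained alternative. This gives $\tilde{\mathbf{F}}=\mathbf{F}\mathbf{M}$, and after cancelling $\underline{\mathbf{F}}_2$, $\mathbf{Z}_2\mathbf{M}=\mathbf{M}\tilde{\mathbf{Z}}_2$. In effect this proves in-line exactly the special case of Kruskal's theorem that is needed: two full-column-rank factors plus one of $k$-rank $2$. Your version is elementary and self-contained. It also makes explicit two points the paper leaves implicit: an alternative only factors through the smoothed matrix if its mode-2 and mode-3 factors are Vandermonde, and the Vandermonde normalization is what removes the scaling freedom in those modes. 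The paper's version buys a link to general $k$-rank theory and the stronger intermediate fact that the reshaped third-order CPD is unique without structural constraints. You are also right to run the shift argument on $\mathbf{F}$. The paper's ESPRIT sketch works in $\mathbf{U}$ and so tacitly needs $\mathbf{A}^{(1)}\odot\underline{\mathbf{A}}^{(2)(k_1)}\odot\mathbf{A}^{(3)(k_2)}$ to have full column rank, which is not a hypothesis.

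\textbf{Two repairs are needed.} First, the justification in Step 1 is false as stated. $\mathbf{X}$ is a Hankel-type smoothed rearrangement of $\mathcal{X}$, not a matrix unfolding, so its rank lower-bounds only the VDM-constrained rank. That is also the only reading under which $R=r(\mathcal{X})$ can hold. Take $I_1=I_4=1$, $I_2=I_3=5$ and $k_i=l_i=3$: both conditions hold generically for $R=6$, yet $\mathcal{X}$ is a $5\times 5$ matrix of rank at most $5$. Replace the sentence with the argument you already use in Step 2: any $R'$-term VDM-constrained decomposition gives $\mathbf{X}=\tilde{\mathbf{E}}\tilde{\mathbf{F}}^T$ with $R'$ columns, hence $R'\ge r(\mathbf{X})=R$.

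Second, in Step 4, reading $\mathbf{A}^{(3)}$ off $\mathbf{F}$ requires $l_2\ge 2$. If $l_2=1$, then $\mathbf{A}^{(3)(l_2)}$ is the all-ones row and carries no generator information. In that case read $\mathbf{A}^{(3)}$ off $\tilde{\mathbf{E}}=\mathbf{E}\mathbf{\Pi}\mathbf{\Lambda}^{-1}$ instead, where $k_2=I_3$. Since $k_2+l_2=I_3+1$, one of the two works whenever $I_3\ge 2$.
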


\begin{proof}
After applying spatial smoothing twice, the matrix representation of tensor in (\ref{man!}) can be transformed into 
\begin{equation}
\begin{aligned}
        &\mathbf{X}=(\mathbf{A}^{(1)} \odot \mathbf{A}^{(2)\ (k_1)} \odot \mathbf{A}^{(3)\ (k_2)} \odot \mathbf{A}^{(2)\ (2)}) \\
        &\quad\quad\quad\quad\quad\quad\quad\times({\mathbf{A}}^{(2)\ (l_1-1)} \odot {\mathbf{A}}^{(3)\ (l_2)} \odot \mathbf{A}^{(4)} )^T,
\end{aligned}
\end{equation}
which is comprised of full column rank matrices $\mathbf{A}^{(1)} \odot \mathbf{A}^{(2)\ (k_1)}\odot \mathbf{A}^{(3)\ (k_2)}$, $\mathbf{A}^{(2)\ (2)}$ and ${\mathbf{A}}^{(2)\ (l_1-1)} \odot {\mathbf{A}}^{(3)\ (l_2)} \odot \mathbf{A}^{(4)} $.  From \textbf{Lemma \ref{van}}, this yields
\begin{equation*}
    \begin{aligned}
        &k(\mathbf{A}^{(1)} \odot \mathbf{A}^{(2)\ (k_1)}\odot \mathbf{A}^{(3)\ (k_2)})\\
        &\quad\quad\quad\quad\quad=r(\mathbf{A}^{(1)} \odot \mathbf{A}^{(2)\ (k_1)}\odot \mathbf{A}^{(3)\ (k_2)})\geq R,\\
        &k({\mathbf{A}}^{(2)\ (l_1-1)} \odot {\mathbf{A}}^{(3)\ (l_2)} \odot \mathbf{A}^{(4)} )\\
        &\quad\quad\quad\quad\quad=r({\mathbf{A}}^{(2)\ (l_1-1)} \odot {\mathbf{A}}^{(3)\ (l_2)} \odot \mathbf{A}^{(4)} ) \geq R.
    \end{aligned}
\end{equation*}
$\mathbf{A}^{(3)\ (2)}$ is a Vandermonde matrix with distinct generators, thus $k(\mathbf{A}^{(3)\ (2)}) \geq 2$. Combining with (\ref{(5)}) leads to
\begin{equation}
    \begin{aligned}
       &k(\mathbf{A}^{(1)} \odot \mathbf{A}^{(2)\ (k_1)}\odot \mathbf{A}^{(3)\ (k_2)})+ \\
       &\quad k({\mathbf{A}}^{(2)\ (l_1-1)} \odot {\mathbf{A}}^{(3)\ (l_2)} \odot \mathbf{A}^{(4)} )+k(\mathbf{A}^{(3)\ (2)}) \geq 2R + 2.
    \end{aligned}
\end{equation}
Thus, the uniqueness condition of decomposition for tensor $\mathcal{X}$ with factor matrices $\mathbf{A}^{(1)} \odot \mathbf{A}^{(2)\ (k_1)}\odot \mathbf{A}^{(3)\ (k_2)}$, $\mathbf{A}^{(2)\ (2)}$ and ${\mathbf{A}}^{(2)\ (l_1-1)} \odot {\mathbf{A}}^{(3)\ (l_2)} \odot \mathbf{A}^{(4)}$ can be satisfied. Provided condition (\ref{man!}) hold, $\mathbf{A}^{(1)} \odot \mathbf{A}^{(2)\ (k_1)} \odot \mathbf{A}^{(3)\ (k_2)} \odot \mathbf{A}^{(2)\ (2)}$ governs the row space of $\mathbf{X}$, while ${\mathbf{A}}^{(2)\ (l_1-1)} \odot {\mathbf{A}}^{(3)\ (l_2)} \odot \mathbf{A}^{(4)} $ governs its column space. Therefore, the factor matrices can be obtained by analyzing the left and right singular matrices of $\mathbf{X}$ via SVD $\mathbf{X} = \mathbf{U} \boldsymbol{\Sigma} \mathbf{V}^H$
\begin{equation}
    \begin{aligned}
        &\mathbf{UM}=\mathbf{A}^{(1)} \odot \mathbf{A}^{(2)\ (k_1)}\odot \mathbf{A}^{(3)\ (k_2)},\\
            &\mathbf{V}^* \boldsymbol{\Sigma} \mathbf{N} = {\mathbf{A}}^{(2)\ (l_1-1)} \odot {\mathbf{A}}^{(3)\ (l_2)} \odot \mathbf{A}^{(4)}. 
    \end{aligned}
\end{equation}
As shown in (\ref{28}) and (\ref{30}), factor matrices $\mathbf{A}^{(2)}$ and $\mathbf{A}^{(3)}$ can be obtained through two ESPRIT processing steps. Similar to (\ref{u1u2}) and (\ref{30}), it follows thta
\begin{equation}
\begin{aligned}
        \mathbf{U}_2 \mathbf{M} &=\mathbf{A}^{(1)} \odot \overline{\mathbf{A}}^{(2)\ (k_1)}\odot \mathbf{A}^{(3)\ (k_2)}\\
        &= \mathbf{A}^{(1)} \odot \underline{\mathbf{A}}^{(2)\ (k_1)}\odot \mathbf{A}^{(3)\ (k_2)}\mathbf{Z}_b = \mathbf{U}_1\mathbf{M}\mathbf{Z}_b.
\end{aligned}
\end{equation}
By repeating the above steps and performing eigenvalue decomposition, the Vandermonde generators of matrix $\mathbf{A}^{(3)}$ can be also obtained. Therefore, the factor matrices $\mathbf{A}^{(2)}$ and $\mathbf{A}^{(3)}$, subject to permutation ambiguity, are uniquely determined. Due to the Vandermonde structure, $\mathbf{A}^{(1)}$ and $\mathbf{A}^{(4)}$ can be further separated form Khatri-Rao product by 
\begin{equation}
    \hat{\mathbf{a}}^{(1)}_r = \left((\hat{\mathbf{a}}^{(2)(k_1)}_r)^H \otimes (\hat{\mathbf{a}}_r^{(3)(k_2)})^H \otimes\mathbf{I}_{I_1}\right)\mathbf{Um}_r,
\end{equation}
\begin{equation}
\begin{aligned}
       &\hat{\mathbf{a}}^{(4)}_r = \\
       &\left(\frac{(\hat{\mathbf{a}}^{(2)(l_1-1)}_r)^H}{(\hat{\mathbf{a}}^{(2)(l_1-1)}_r)^H (\hat{\mathbf{a}}^{(2)(l_1-1)}_r)} \otimes \frac{(\hat{\mathbf{a}}^{(3)(l_2)}_r)^H}{(\hat{\mathbf{a}}^{(3)(l_2)}_r)^H {\hat{\mathbf{a}}^{(3)(l_2)}_r}} \otimes\mathbf{I}_{I_4}\right)\\
       &\times\mathbf{V}^* \boldsymbol{\Sigma} \mathbf{n}_r.
\end{aligned}
\end{equation}
Thus, the factor matrices $\mathbf{A}^{(1)}$ and $\mathbf{A}^{(4)}$, subject to permutation ambiguity, are uniquely determined. The uniqueness of a forth order tensor with two Vandermonde factor matrices is proved. 
\end{proof}

In our proposed model under generic case, condition (\ref{man!}) can be rewritten as
\begin{equation}
    \text{min}\left((l_1-1)l_2N_s, k_1k_2Q_\text{MS}\right) \geq R.
\end{equation}
Given the typically small number of paths $R$, it is plausible to assume that the product of RF chains $Q_\text{MS}$ and $k_1k_2$ exceeds $R$, particularly when $k_1, k_2, l_1, l_2 > 2$. Correspondingly, the number of OFDM symbols $N_s$ within a time slot generally surpasses the path count $R$, especially when multiplied by the factors $(l_1 - 1)$ and $l_2$. Thus, the condition $(l_1 - 1) l_2 N_s \geq R$ generally holds in practical scenarios. In summary, for the hybrid precoding matrix $\mathbf{F}_\text{RF}\mathbf{F}_{k,m}$, training symbol matrix $\mathbf{X}$, and combining matrix $\mathbf{Q}$ with entries uniformly selected from a unit circle, the proposed method requires only $N_s = R/k_1k_2$ and $Q_\text{MS} = R/(l_1 - 1)l_2$ to satisfy the uniqueness condition (\ref{man!}). In practical implementations, slightly larger values of $N_s$ and $Q_\text{MS}$ may be necessary to achieve accurate channel estimation due to the presence of observation noise and estimation errors.
\subsection{Channel Parameters Extraction Method}
Following the recovery of the factor matrices, we process the estimation of the channel parameters. It should be noted that the $r$-th column of $\mathbf{A}$ and $\mathbf{B}$ are intrinsically linked to the angular parameters in the $r$-th path. Hence, the parameter $\hat{\theta_r}$ and $\hat{\phi_r}$ can therefore be estimated by using a correlation based approach, as detailed in the following expression
\begin{align}
    \hat{\theta_r} &= \mathop{\arg\max}\limits_{\hat{\theta_r}} \frac{ \Big \lvert \mathbf{a}_r^H \mathbf{a}(\hat{\theta_r}) \Big \rvert}{\big\lVert \mathbf{a}_r\big\rVert_2  \big\lVert \mathbf{a}(\hat{\theta_r})\big\rVert_2},\label{thetaest}\\
    \hat{\phi_r} &= \mathop{\arg\max}\limits_{\hat{\phi_r}} \frac{ \Big \lvert \mathbf{d}_r^H \mathbf{d}(\hat{\phi_r}) \Big \rvert}{\big\lVert \mathbf{d}_r\big\rVert_2  \big\lVert \mathbf{d}(\hat{\phi_r})\big\rVert_2}\label{phiest}.
\end{align}
Note that the factor matrix $\mathbf{B}$ can be rewritten as product of the Vandermonde matrix $\mathbf{B}_{(1)}$ and a diagonal matrix $\mathbf{B}_{(2)}$, where
\begin{align}
    \mathbf{B}_{(1)} &= [\mathbf{b}_{(1)}^1,\cdots, \mathbf{b}_{(1)}^R ],\\
    \mathbf{b}_{(1)}^r &= [e^{-j2\pi(f_s\tau_r)/K_t},\cdots,e^{-j2\pi K(f_s\tau_r)/K_t}]^T,\\
    \mathbf{B}_{(2)} &= \text{diag}([\beta_1e^{j2\pi\tau_1\nu_1},\cdots,\beta_Re^{j2\pi\tau_R\nu_R}]).
\end{align}
Thus, we have $\mathbf{B} = \mathbf{B}_{(1)}\mathbf{B}_{(2)}$ and the estimation of time delay $\hat{\tau_r}$ can be estimated by
\begin{equation}
        \hat{\tau_r}=-\frac{K_t}{2\pi{f}_s}\angle\mathbf{B}_{(2)[r,r]}=-\frac{K_t}{2\pi{f}_s}\angle({({\hat{\underline{\mathbf{b}}}_r})^\dagger}\overline{\hat{\mathbf{b}}}_r),\label{tauest}
\end{equation}
where $\underline{\mathbf{b}_r}$ and $\overline{\mathbf{b}}_r$ are obtained by deleting the top and bottom row of $\mathbf{b}$, respectively, and $\angle$ denotes the phase angles of a complex number. If the SCPD algorithm is adopted for tensor decomposition, the Vandermonde generator $z_{c,r}$ and $z_{c,r}$ obtained from the decomposition can be directly utilized to achieve parameter estimation. The parameter $\hat{\nu}_r$ and $\hat{\tau}_r$ can also be directly obtained following a procedure analogous to (\ref{tauest}), which can be given by
\begin{equation}
    \hat{\nu}_r = \frac{\angle z_{c,r}}{2\pi N_sTs}, \quad\hat{\tau}_r = -\frac{K_t\angle z_{b,r}}{2\pi{f}_s}.
    \label{nutau}
\end{equation}
This method not only avoids the estimation inaccuracy inherent in grid-based search techniques but also eliminates the requirement for calculating the pseudo-inverse of $\mathbf{b}_r$. Finally, the estimation of the propagation loss $\hat{\beta_r}$ can be estimated by
\begin{align}
    \hat{\beta_{r}}=\label{betaest}\mathop{\arg\min}\limits_{\beta_r} \bigg\lVert \!\!\left(\mathbf{Y}_{(2)}\!\left(\!(\mathbf{D} \odot \mathbf{C}\odot \mathbf{A})^{T}\!\right)^{\dagger}\!\right)_{[:,r]}\!\!\!-\!\!\beta_r\mathbf{b}\left(\hat{\tau_r},\hat{\nu_r}\right) \!\!\bigg\rVert^2_2.
\end{align}
To this end, the unknown parameters of the targets can be obtained at the MS and once these channel parameters are obtained, the channel matrix can be reconstructed via (\ref{pinyuxindaojuzhen}).

\subsection{Derivation of CRB}
In this subsection, we derive CRB for the channel parameter estimation problem considered in (\ref{88}). Details of the derivation can be found in Appendix \ref{ccb}. As is well known, the CRB is a lower bound on the variance of any unbiased estimator\cite{Chi2006}, which illustrates the behavior of the resulting bounds. For ease of exposition, let 
$$ \mathbf{p} = [\nu_1,\cdots,\theta_r,\cdots,\phi_r, \cdots,\tau_r, \cdots, \beta_R]^T\in\mathbb{C}^{5R\times1}$$ 
denote the estimation parameter vector. Thus, the log-likelihood function
of $\mathbf{p}$ can be expressed as
\begin{equation*}
    \begin{aligned}
        L(\mathbf{p}) &= Q_\text{MS}N_sKM - \frac{1}{\sigma_n^2}\big\|\mathbf{Y}_{(n)}-\mathbf{Z}_{(n)}\Big\|_F^2,
\end{aligned}
\end{equation*}
where $\mathbf{Z}_{(n)}$ denotes the mode-$n$ unfolding of $\mathcal{Z}$. The calculation of complex Fisher information matrix (FIM)  for $\mathbf{p}$ is defined as$\mathbf{\Omega}(\mathbf{p}) =\mathbb{E}\Big[\left(\frac{\partial L(\mathbf{p})}{\partial \mathbf{p}}\right)\left(\frac{\partial L(\mathbf{p})}{\partial\mathbf{p}}\right)^H\Big]$. Further details for the derivations of the log-likelihood function, the partial derivative of $L(\mathbf{p})$ with respect to $\mathbf{p}$ and the FIM can be found in Appendix \ref{ccb}. Hence, the CRB can be derived as the inverse of the matrix $\mathbf{\Omega}(p)$
\begin{equation}
    \text{CRB}(\mathbf{p}) = \mathbf{\Omega}(\mathbf{p})^{-1}.
\end{equation}
\subsection{Analysis of Complexity}
\begin{figure*}[ht]
    \centering
    \begin{tabular}{@{}ccc@{}}       
        \begin{subfigure}[b]{0.3\linewidth}
            \centering
            \includegraphics[width=\linewidth]{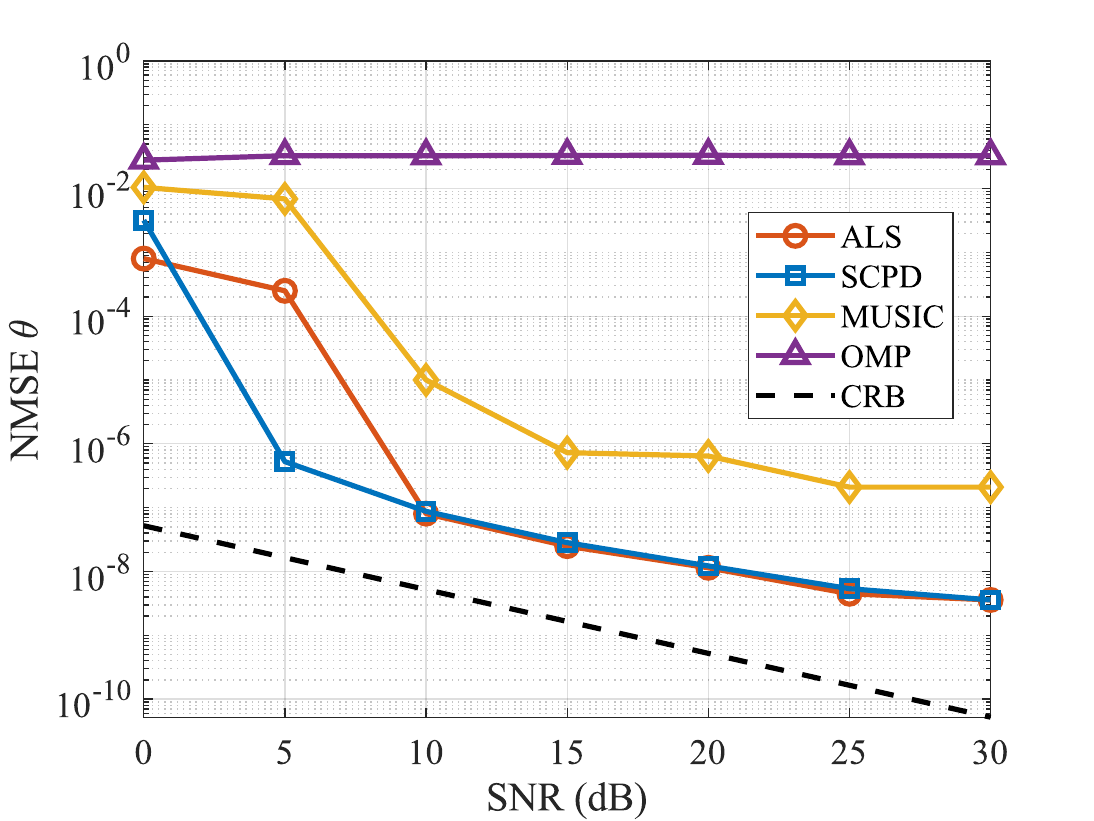}
            \caption{}
            \label{theta}
        \end{subfigure}
        &
        \begin{subfigure}[b]{0.3\linewidth}
            \centering
            \includegraphics[width=\linewidth]{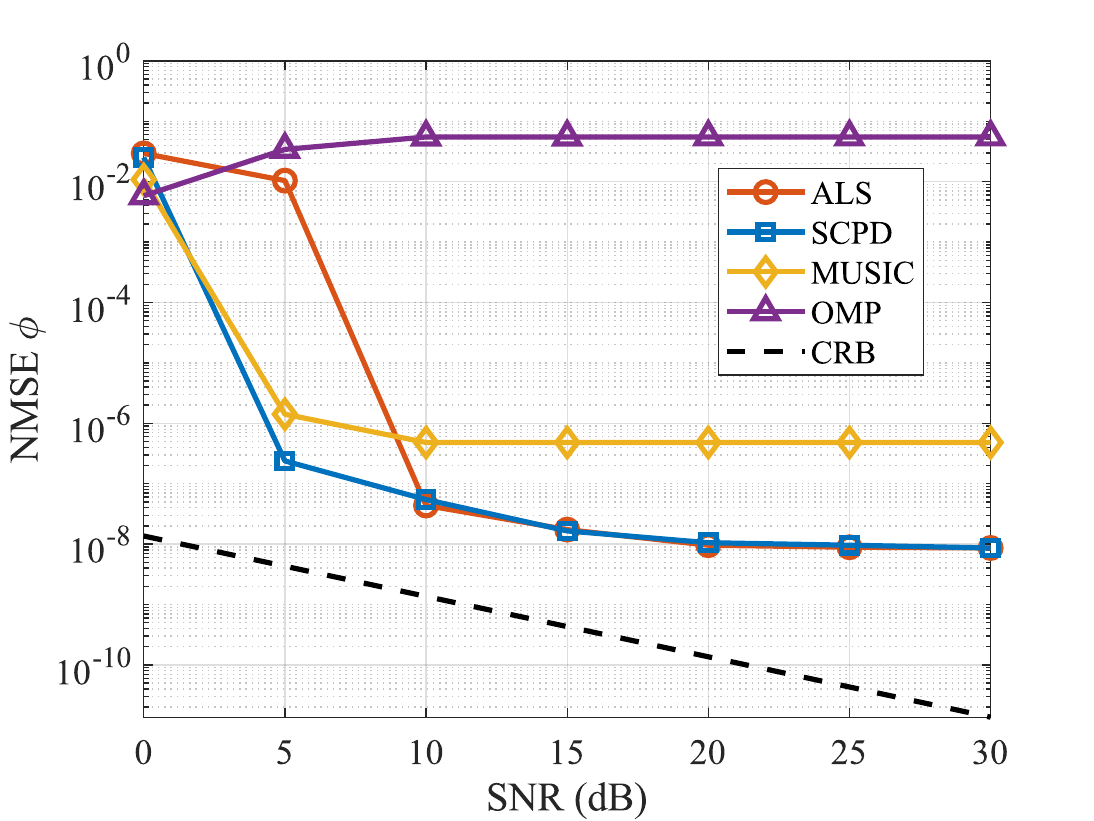}
            \caption{}
            \label{phi}
        \end{subfigure}    
        &
        \begin{subfigure}[b]{0.3\linewidth}
            \centering
            \includegraphics[width=\linewidth]{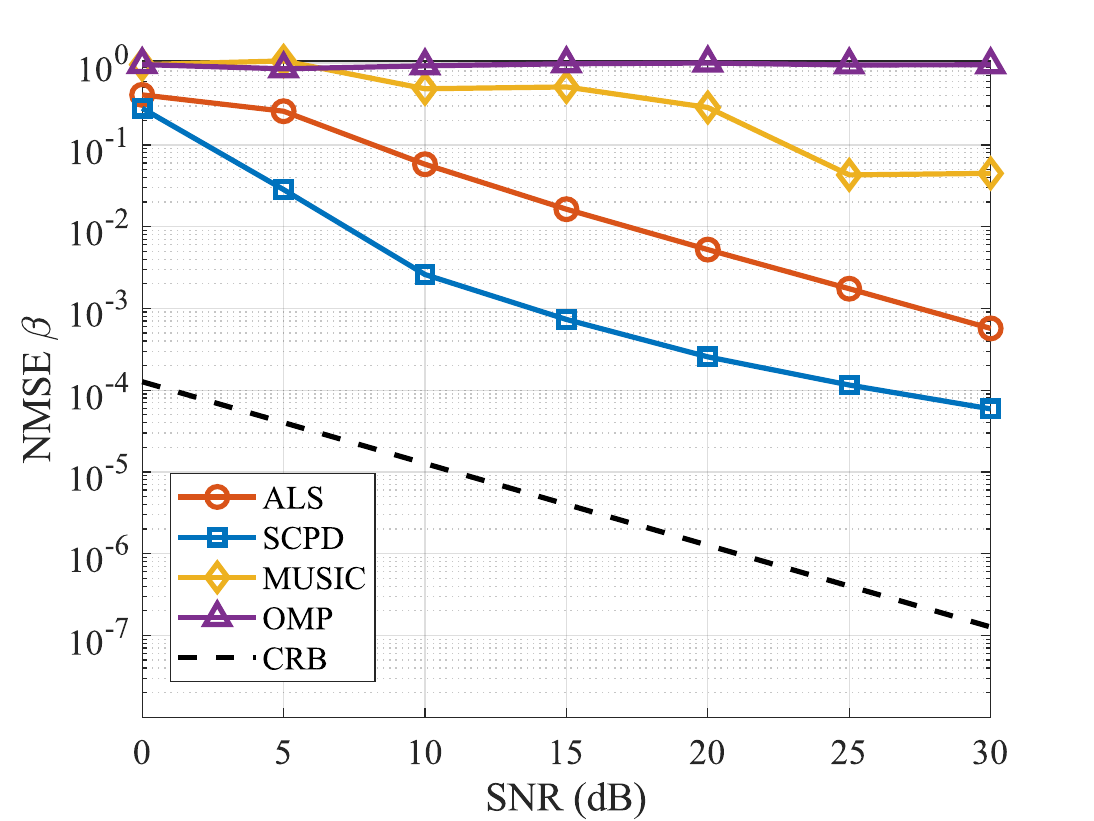}
            \caption{}
            \label{beta}
        \end{subfigure} 
\end{tabular}
    \centering
    \begin{tabular}{@{}cc@{}}       
        \begin{subfigure}[b]{0.3\linewidth}
            \centering
            \includegraphics[width=\linewidth]{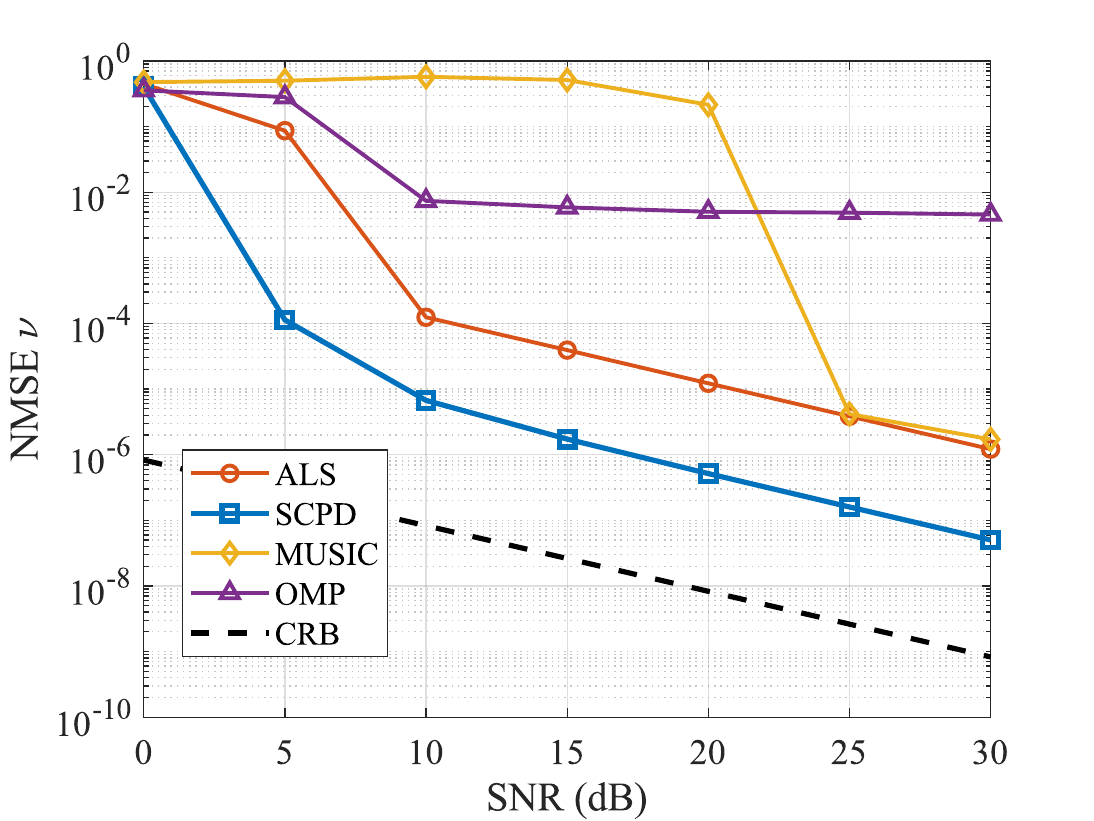}
            \caption{}
            \label{nu}
        \end{subfigure}
        &
        \begin{subfigure}[b]{0.3\linewidth}
            \centering
            \includegraphics[width=\linewidth]{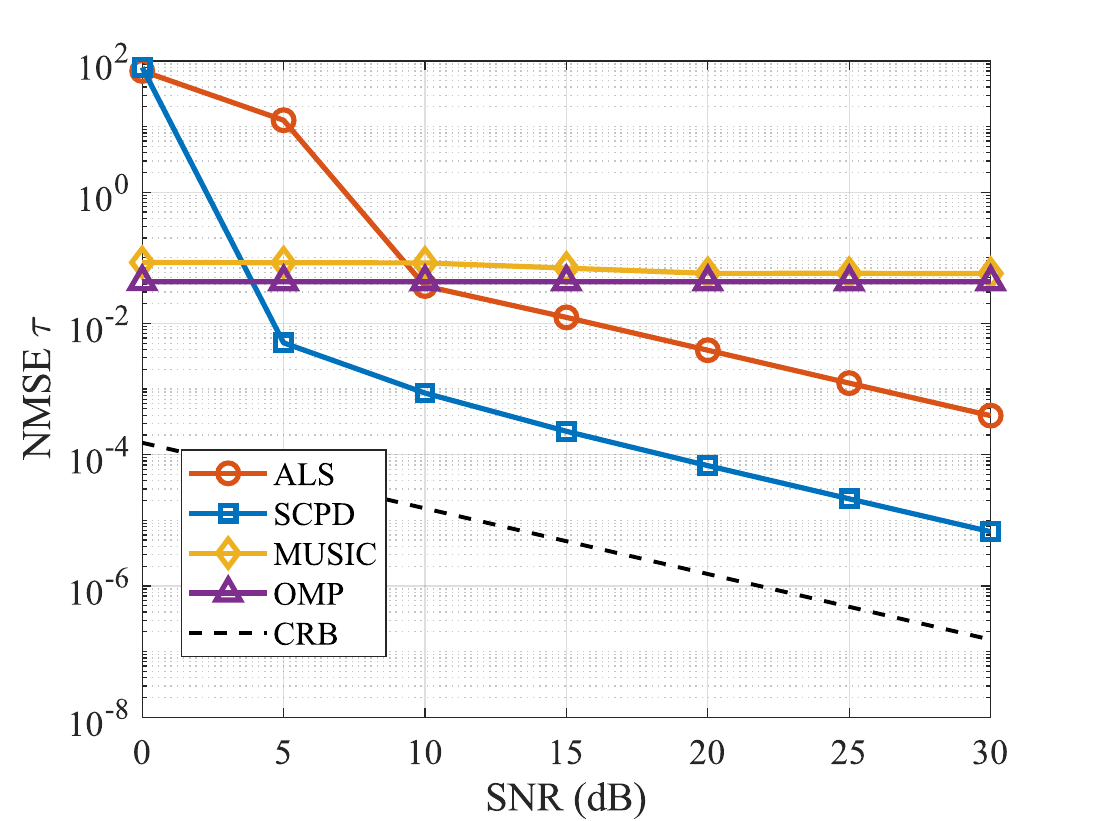}
            \caption{}
            \label{tau}
        \end{subfigure}    
\end{tabular}
\caption{Parameter estimation performance for the proposed algorithm and baseline algorithm.}
\label{gujicanshu}
\end{figure*}

The whole computation procedure of our proposed algorithm can be divided into two stage: tensor decompostion and parameter extraction from the factor matrices. For ALS, a single procedure in iteration (\ref{alstendec}) requires $\mathcal{O}\{Q_\text{MS}N_sKMR+(N_sKM+Q_\text{MS}KM+Q_\text{MS}N_sM+Q_\text{MS}N_sK)R^2+R^3\}$   multiply operations, which can be simplified to $\mathcal{O}\{Q_\text{MS}N_sKMR\}$ as the dominant term. Then the total computation complexity of ALS is $\mathcal{O}\{N_\text{iter}Q_\text{MS}N_sKMR\}$, where $N_\text{iter}$ denotes the required max iteration steps. The main computational steps of SCPD algorithm comprise the spatial smoothing operation, one SVD, and one eigenvalue decomposition. The computational complexity of the spatial smoothing step is $\mathcal{O}\{l_1l_2\times(k_1 k_2 KM+k_1 k_2 KMQ_\text{BS}^2+k_1 k_2 KMQ_\text{BS}^2N_s)\}$, which constitutes the dominant computational cost. The complexity of the SVD step is $\mathcal{O}\{k_1 k_2 Q_\text{BS} l_1 l_2 N_s R\}$, while the eigenvalue decomposition step has a complexity of $\mathcal{O}\{(k_1 - 1)k_2Q_\text{BS} R^2 + R^3\}$. In summary, the complete computational complexity of ALS algorithm amounts to $\mathcal{O}\{N_\text{iter}Q_\text{MS}N_sKMR+RGK\}$ and the complete computational complexity of SCPD algorithm amounts to $\mathcal{O}\{k_1 k_2 KMQ_\text{BS}^2N_sl_1 l_2R\}$.
\section{Numerical Simulations}
This section provides numerical simulations to assess the accuracy of the proposed algorithm. We also conducts a comparative analysis against with the baseline algorithms. We consider a BS equipped with $N_\text{BS} = {\text{12}}$ transmit MA and a MS equipped with $N_\text{MS} = {\text{12}}$ receive MA. The number of RF chains at the BS and MS are $Q_\text{BS} = {\text{10}}$ and $Q_\text{MS} = {\text{10}}$. The carrier frequency is set to be $f_\text{c} = {\text{28}}$ GHz with the system bandwidth of $f_s = {\text{100}}$ MHz, which is divided into $K_t = {\text{2048}}$ subcarriers. Thus, the space of each subcarriers is $\Delta f = {\text{13.7}}$ kHz. Let $M = \text{14}$ time slots in each time blocks and $N_s = \text{10}$ OFDM symbols in each time slots. The SNR is formally expressed as the ratio of the average transmit power which is denoted by $P$, to the average noise power which is represented by $\sigma_n^2$, and can be given by $\text{SNR} = P/\sigma_n^2$. Thus, the complex channel gain $\beta$ for each path can be drawn from the distribution $\mathcal{CN}(0,\sigma_n^2)$. The angular, frequency offset, and time delay parameters of the multipath components are generated from uniform distribution. In our simulations, all entries in $\mathbf{X}$ and $\mathbf{Q}$ are randomly generated. 
To satisfy the power constraint in the hybrid precoding architecture, the transmit power is normalized such that $\|\mathbf{X}_{[:,n]}\|^{2} = 1/N_{\text{BS}}$ for $n = 1, 2, \cdots, N_s$. 
\subsection{Parameter Estimation Performance}
We first assess the estimation accuracy of the channel parameters. For each parameter set, normalized mean square error (NMSE) is selected as evaluation metric, computed as follows
\begin{equation*}
    \text{NMSE}(\mathbf{Z}) = \mathbb{E}\left\{\frac{\|\mathbf{Z}-\hat{\mathbf{Z}}\|^2}{\|\mathbf{Z}\|^2}\right\}, 
\end{equation*}
where $\mathbf{Z}\in\{\boldsymbol{\nu}, \mathbf{\Theta}, \mathbf{\Phi}, \boldsymbol{\tau}, \boldsymbol{\beta}\}$ and $\boldsymbol{\nu} = [\nu_1, \cdots, \nu_R]$, $\mathbf{\Theta} = [\theta_1, \cdots, \theta_R]$, $\mathbf{\Phi} = [\phi_1, \cdots, \phi_R]$, $\boldsymbol{\tau} = [\tau_1, \cdots, \tau_R]$ and  $\boldsymbol{\beta} = [\beta_1, \cdots, \beta_R]$.   Additionally, we introduce the minimum attainable variance by an unbiased estimator CRB as a theoretical reference. Orthogonal matching pursuit (OMP)\cite{8813076} and multiple signal classification (MUSIC)\cite{1143830} are applied to be the baseline algorithms in numerical simulations. 

As shown in Fig.~\ref{nu} and Fig.~\ref{tau}, the two tensor decomposition-based algorithms achieve the best estimation performance, demonstrating their superiority over conventional methods, particularly in estimating the frequency offset $\nu$. Estimation accuracy improves monotonically with SNR, because the SCPD algorithm exploits the Vandermonde structure of factor matrices and directly extracts parameters via (\ref{nutau}), avoiding the resolution limitation of grid-based search. Fig.~\ref{theta} and Fig.~\ref{phi} evaluate the estimation accuracy of angular parameters. Performance plateaus for all algorithms at SNR above 15 dB, primarily due to multiplication with matrices $\mathbf{Q}$ and $\mathbf{X}$, which reduces the observation dimensions. Moreover, the introduction of MA destroys the uniform spacing of the antenna array, eliminating the Vandermonde structure and making super-resolution estimation infeasible. Consequently, only the grid-search strategy in (\ref{phiest}) and (\ref{thetaest}) is available, limiting further accuracy improvement at high SNR. Despite this, the proposed tensor decomposition-based algorithm consistently outperforms the baselines. 
The MUSIC and OMP algorithms do not exploit the high-dimensional structure of the received signal to extract factor matrices for separate processing. This forces each parameter extraction to process the entire received signal, rather than only the information-critical part, adversely affecting both computational complexity and estimation accuracy. Additionally, both benchmark algorithms impose strict structural requirements. OMP requires the restricted isometry property (RIP)~\cite{8813076}, while MUSIC demands orthogonality between signal and noise subspaces. When these conditions are violated, both algorithms fail. Furthermore, neither OMP nor MUSIC can handle parameter mismatching across different paths. In Fig.~\ref{beta}, all algorithms exhibit deviations from the CRB when estimating the propagation loss $\beta$, because estimating $\beta$ depends on the other four parameters, leading to error accumulation. However, the estimation error of tensor decomposition-based algorithms decreases as SNR improves, while that of MUSIC and OMP remains constant, further demonstrating the superiority of the proposed approach.
\subsection{Channel Estimation Performance}
Then, we focus on the NMSE performance of the channel estimation methods. Based on above estimated channel parameters, the algorithm performance of the rebuilt channel is subsequently evaluated using NMSE, defined as
\begin{equation}
        \text{NMSE}_{\mathbf{H}} = \frac{\sum^M_{m=1}\sum^K_{k=1}\Big \lVert\mathbf{H}_{k,m} - \widehat{\mathbf{H}}_{k,m}\Big \rVert_F^2}{MK\sum^M_{m=1}\sum^K_{k=1}\Big \lVert \mathbf{H}_{k,m}\Big \rVert_F^2}.
\end{equation}

The superiority of the proposed algorithm is evident from the NMSE results for the rebuilt channel in Fig.~\ref{mseh}. As discussed, MUSIC and OMP impose stringent requirements. For parameter $\nu$, the observed dimension is insufficient to satisfy the RIP condition, making OMP ineffective. Moreover, their iterative atom selection leads to error accumulation, and both algorithms are highly noise-sensitive. They also require joint estimation of $\tau$ and $\nu$, resulting in poor performance in Fig.~\ref{tau} and degraded channel reconstruction accuracy. In contrast, tensor decomposition-based algorithms impose relaxed structural constraints and exhibit strong noise robustness. In particular, SCPD achieves the best accuracy in low SNR scenarios, as shown in Fig.~\ref{gujicanshu} and Fig.~\ref{mseh}. 
\begin{figure}
    \centering    \includegraphics[width=0.75\linewidth]{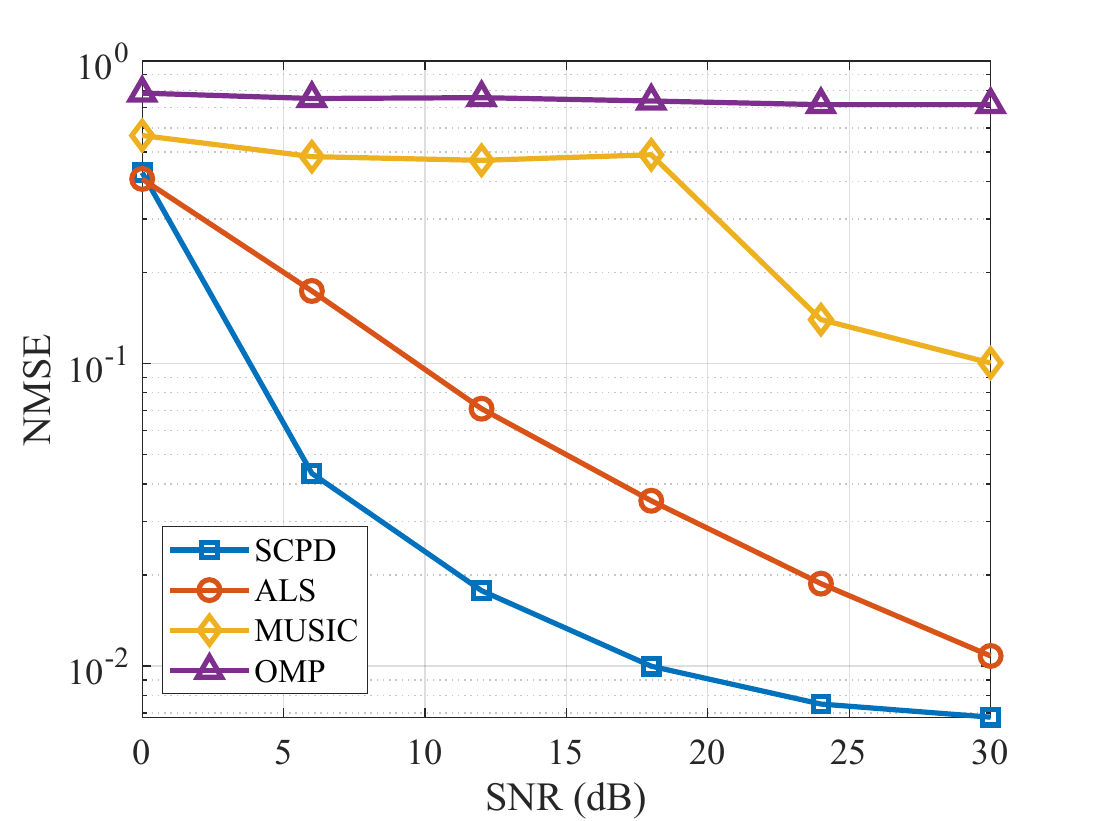}
    \caption{NMSE of the high-dimensional rebuilt channel versus SNR.}
    \label{mseh}
\end{figure}

Fig.~\ref{changeknmse} shows the NMSE versus the number of subcarriers at SNR = 20 dB. The proposed algorithm clearly outperforms the benchmarks. For tensor decomposition-based algorithms, estimation accuracy improves as $K$ increases, whereas OMP shows no such enhancement. The proposed algorithm remains effective for angle, delay, and loss estimation, while for OMP, $K$ mainly affects delay estimation alone, so its overall performance stays largely unchanged \cite{11069254}. Fig.~\ref{changeRF} illustrates the NMSE as the number of RF chains increases. The two tensor decomposition-based algorithms again demonstrate superior performance, similar to Fig.~\ref{changeknmse}. SCPD consistently achieves the highest accuracy, with estimation error decreasing as the parameter increases. This advantage stems from its exploitation of the Vandermonde structure, which introduces structural constraints, making it more accurate than ALS. The two sets of results Fig. \ref{change} demonstrate that as parameters such as the number of the length of signal increase, the tensor decomposition-based algorithm exhibits better estimation performance. This improvement stems from the fact that  the increase of row dimension of the factor matrices leads to better separation between the signal and noise subspaces, which enhances the performance of ESPRIT-like algorithms such as SCPD. Furthermore, the increased row dimension elevates the Kruskal rank, making it easier to satisfy the uniqueness conditions for tensor decomposition and consequently improving estimation accuracy.

\begin{figure}
    \centering
    \begin{tabular}{@{}cc@{}}       
        \begin{subfigure}[b]{0.5\linewidth}
            \centering
            \includegraphics[width=\linewidth]{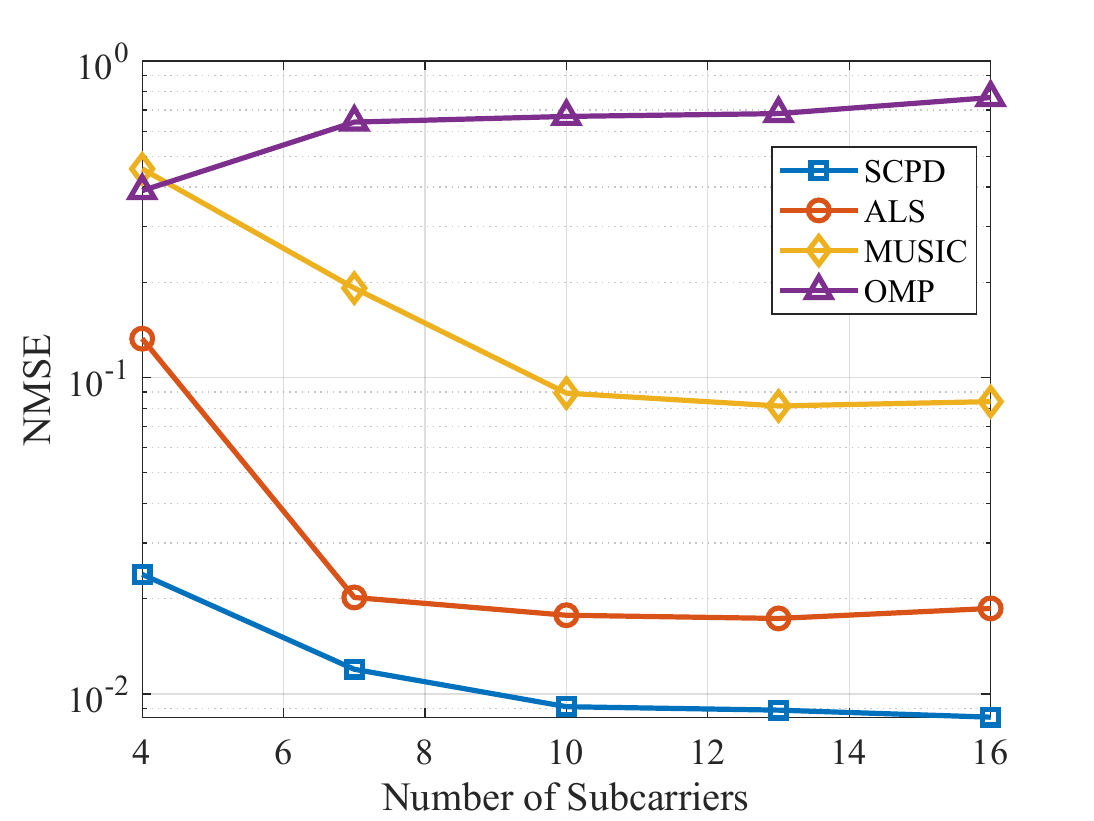}
            \caption{}
            \label{changeknmse}
        \end{subfigure}
        &
        \begin{subfigure}[b]{0.5\linewidth}
            \centering
            \includegraphics[width=\linewidth]{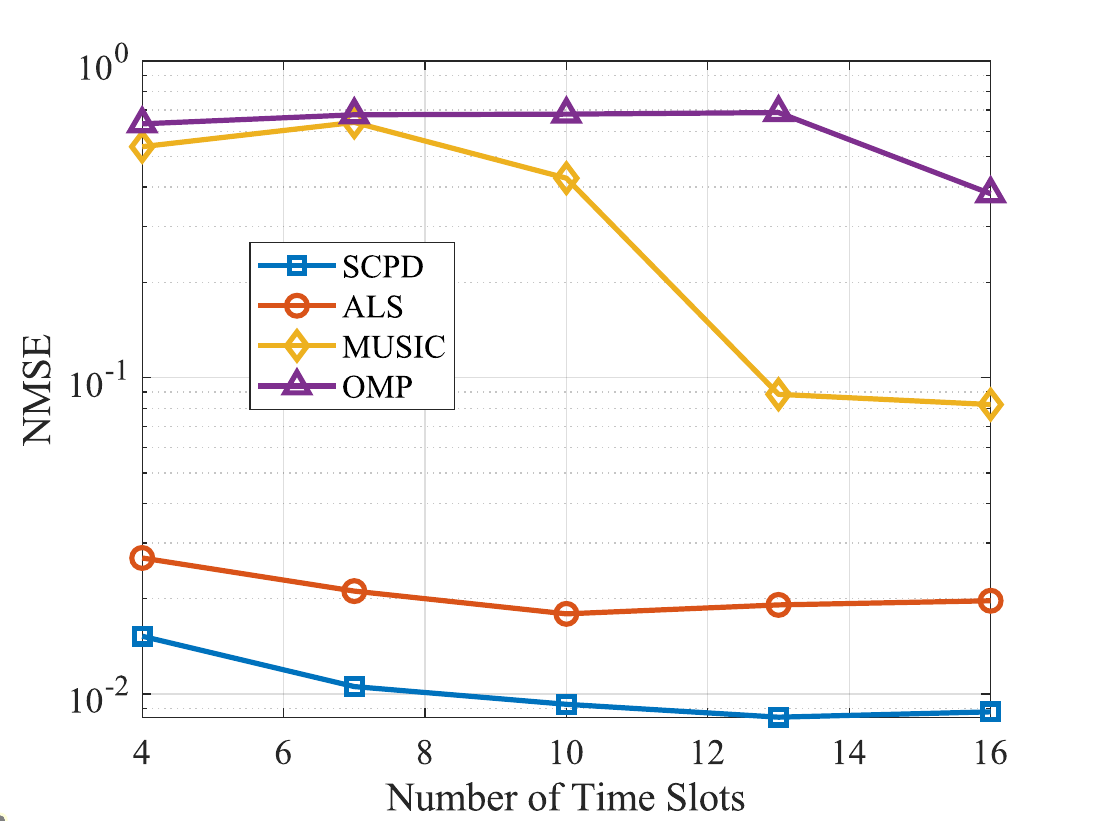}
            \caption{}
            \label{changeRF}
        \end{subfigure}   
\end{tabular}
\caption{NMSE of the rebuilt channel as the setting changes.}
\label{change}
\end{figure}

\begin{table}
\centering
\caption{\\CPU Time of Different Algorithms}
\label{time}
\begin{tabular}{cccc}
\toprule
\multirow{2}{*}{\textbf{Algorithms}} & \multicolumn{3}{c}{\textbf{CPU Time (s)}} \\
\cmidrule{2-4}
& Tensor Decomposition & Parameter Extraction & Total \\
\midrule
SCPD   & 0.0234 & 0.0017 & 0.0251 \\
ALS    & 0.0186 & 0.0023 & 0.0209 \\
MUSIC  & --     & 0.0227 & 0.0227 \\
OMP    & --     & 0.0535 & 0.0535 \\
\bottomrule
\end{tabular}
\end{table}

Table~\ref{time} lists the CPU time for one round of parameter estimation at SNR = \text{20} dB and $K=\text{8}$ subcarriers. The tensor decomposition-based algorithms avoid grid search by extracting parameters from Vandermonde generators, making its estimation step faster than benchmark algorithms. OMP is the slowest due to iterative atom selection processing high-dimensional signals. Notably, the computational complexity of OMP, MUSIC and ALS scales linearly with the length of the received signal, whereas that of SCPD increases quadratically with the signal length. It can be observed that when the factor matrices of the tensor signal have small dimensions, the computational complexity of the SCPD algorithm is lower than all algorithms. In such cases, the complexity of ALS algorithm is largely dominated by number of iterations $N_\text{iter}$.

\begin{figure}
    \centering    \includegraphics[width=0.75\linewidth]{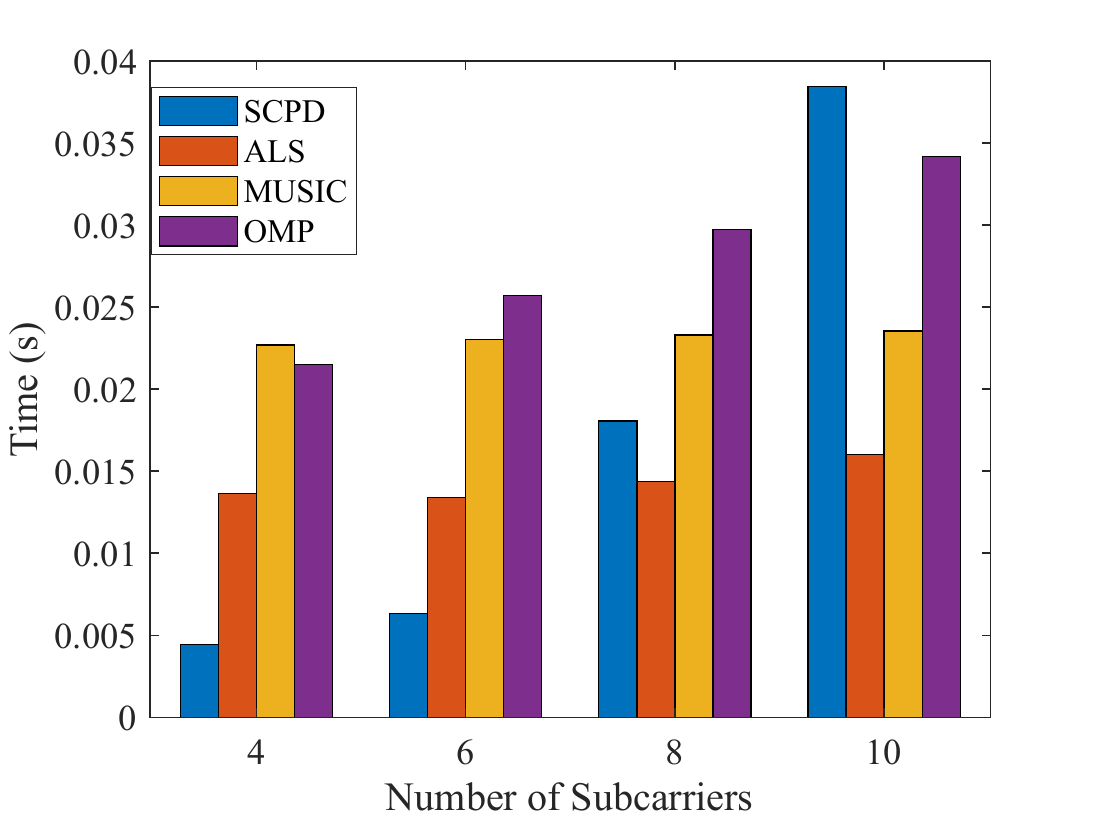}
    \caption{CPU time of different algorithm.}
    \label{timek}
\end{figure}

\section{Conclusion}
In this paper, a tensor decomposition-based dynamic channel estimation algorithm for mmWave MA MIMO systems was proposed. Specifically, dynamic channel was constructed  as superstition of channel matrix at each sparse path based on the field response model and received signal was reformulated into a forth-order tensor to address the high-dimensional structure. Then, ALS and SCPD algorithm were utilized to solve tensor decomposition problem and uniqueness of the decomposition in our model was proved. Subsequently, CRB for each parameter was derived in detail and the parameters were extracted from  factor matrices and the channel matrix was rebuilt based on these parameters. Simulation results demonstrated that the proposed algorithms exhibited effectiveness when compared with the baseline algorithms, which verified the superiority of our proposed algorithm in accuracy. Finally, the complexity evaluation and the runtime result demonstrated that the proposed algorithm exhibits less computationally demanding. Consequently, the proposed tensor decomposition-based channel estimation algorithm offered significant practical advantages for implementation.

{\appendices
\section{Derivation of CRB}
\label{ccb}
This appendix provides the comprehensive derivation of the proposed CRB, containing the calculations for elements comprising the partial derivative of log-likelihood function w.r.t. each paramters, the calculation of FIM. We will take one main parameter as an example to illustrate the derivation of its CRB. The derivation for the other parameters follows a similar procedure and is omitted here due to space limitations. To calculate FIM $\mathbf{\Omega}(\mathbf{p}) = \mathbb{E}\Big[(\frac{\partial L(\mathbf{p})}{\partial \mathbf{p}})^H(\frac{\partial L(\mathbf{p})}{\partial \mathbf{p}})\Big]$, we first compute the partial derivative of $L(\mathbf{p})$ with respect to each elements in  $\mathbf{p}$, i.e. $\frac{\partial L(\mathbf{p})}{\partial \theta_l}$. 
\subsection{Partial Derivative of $L(\mathbf{p})$ with Respect to $\mathbf{p}$}
The partial derivative of \( L(p) \) with respect to \( \theta_l \) can be computed as
\begin{equation}
    \frac{\partial L(p)}{\partial \theta_l} = \text{tr} \left\{ \left( \frac{\partial L(p)}{\partial \mathbf{A}} \right)^T \frac{\partial \mathbf{A}}{\partial \theta_l} + \left( \frac{\partial L(p)}{\partial \mathbf{A}^*} \right)^T \frac{\partial \mathbf{A}^*}{\partial \theta_l} \right\},
    \label{59}
\end{equation}
where
\begin{equation}
    \begin{aligned}
\frac{\partial L(p)}{\partial \mathbf{A}} &= \frac{1}{\sigma^2} (\mathbf{Y}_{(1)}^T - (\mathbf{D} \odot\mathbf{C} \odot \mathbf{B}) \mathbf{A}^T)^H (\mathbf{D} \odot\mathbf{C} \odot \mathbf{B}),\\
\frac{\partial L(p)}{\partial \mathbf{A}^*}&=\left(\frac{\partial L(p)}{\partial \mathbf{A}}\right)^*,\\
\frac{\partial \mathbf{A}}{\partial \theta_l} &= [\mathbf{0},\cdots,\tilde{\mathbf{a}}_l,\cdots,\mathbf{0}],\\
    \end{aligned}
\end{equation}
For the steering vector in MA system, we have the derivative of vector $\mathbf{a}$ with respect to parameter $\theta$ is derived as follows:
\begin{equation}
    \begin{aligned}
        &\tilde{\mathbf{a}} =\\
        &-j \frac {2\pi}{\lambda}\text{sin}\theta_r\mathbf{Q}^T\left[x_1^\text{MS} e^{j\frac{2\pi}{\lambda}\rho_r(x_1^\text{MS})}, \cdots, x_R^\text{MS} e^{j\frac{2\pi}{\lambda}\rho_r(x_R^\text{MS})}\right]^T.
    \end{aligned}
\end{equation}
We define that $\tilde{\mathbf{A}} = \left[\tilde{\mathbf{a}}_1, \cdots, \tilde{\mathbf{a}}_R\right]$. Substitution of the results into Equation (\ref{59}) leads to the following:
\begin{equation}
    \begin{aligned}
        &\frac{\partial L(p)}{\partial \theta_l} \\
        &=\mathbf{e}_r^T \frac{1}{\sigma^2} (\mathbf{D} \odot\mathbf{C} \odot \mathbf{B})^T (\mathbf{Y}_{(1)}^T - (\mathbf{D} \odot\mathbf{C} \odot \mathbf{B}) \mathbf{A}^T)^{*} \tilde{\mathbf{a}}_r\\
        &\ \ +\mathbf{e}_r^T \frac{1}{\sigma^2} (\mathbf{D} \odot\mathbf{C} \odot \mathbf{B})^H (\mathbf{Y}_{(1)}^T - (\mathbf{D} \odot\mathbf{C} \odot \mathbf{B}) \mathbf{A}^T) \tilde{\mathbf{a}}_r^{*}\\
        &=\\
        &2\text{Re}\{\mathbf{e}_l^T \frac{1}{\sigma^2} (\mathbf{D} \odot\mathbf{C} \odot \mathbf{B})^T (\mathbf{Y}_{(1)}^T - (\mathbf{D} \odot\mathbf{C} \odot \mathbf{B}) \mathbf{A}^T)^{*}\tilde{\mathbf{A}}\mathbf{e}_l\},
    \end{aligned}
\end{equation}
where $\mathbf{e}_r\in \mathbb{C}^{R\times1}$ is defined as the canonical selection vector, characterized by having its single non-zero entry at index $r$. 
\subsection{Calculation of FIM}
For ease of expression, define that $\boldsymbol{\Theta} = [\theta_1, \cdots, \theta_R]$.  Similarily, we have $\boldsymbol{\Phi}=[\phi_1, \cdots, \phi_R]$, $\boldsymbol{\tau}=[\tau_1, \cdots, \tau_R]$, $\boldsymbol{\nu}=[\nu_1, \cdots, \nu_R]$ and $\boldsymbol{\beta}=[\beta_1, \cdots, \beta_R]$. We first compute the principal minors of $\mathbf{\Omega}(\mathbf{p})$, such as $\mathbb{E}\left\{(\frac{\partial L(\mathbf{p})}{\partial \boldsymbol{\Theta}})^H(\frac{\partial L(\mathbf{p})}{\partial \boldsymbol{\Theta}})\right\}$, and the $(l_1, l_2)$ entry of  it can be given by
\begin{equation}
\begin{aligned}
        &\mathbb{E} \left\{ \left( \frac{\partial L(p)}{\partial \theta_{l_1}} \right)^* \left( \frac{\partial L(p)}{\partial \theta_{l_2}} \right) \right\}  \\&= 
        4 \mathbb{E} \left[ \text{Re}\{ \mathbf{e}_{l_1}^T \mathbf{N}^a \mathbf{e}_{l_1} \} \text{Re}\{ \mathbf{e}_{l_2}^T \mathbf{N}^a \mathbf{e}_{l_2} \} \right]       \\
        &= \mathbb{E} \left[ (\mathbf{N}^a(l_1, l_1) + \mathbf{N}^a(l_1, l_1)^*) \left( \mathbf{N}^a(l_2, l_2) + \mathbf{N}^a(l_2, l_2)^* \right) \right],
\end{aligned}
\end{equation}
where ${N}^a(l_1, l_2)$ denotes $(l_1, l_2)$th entry of $\mathbf{N}^a$. We define that
\begin{equation}
    \begin{aligned}
        &\mathbf{N}^a\in \mathbb{C}^{R\times R}\\
        &=\frac{1}{\sigma^2} (\mathbf{D} \odot\mathbf{C} \odot \mathbf{B})^T (\mathbf{Y}_{(1)}^T - (\mathbf{D} \odot\mathbf{C} \odot \mathbf{B}) \mathbf{A}^T)^{*}\tilde{\mathbf{A}}\mathbf{e}_l\\
        &=\frac{1}{\sigma^2} (\mathbf{D} \odot\mathbf{C} \odot \mathbf{B})^T \mathbf{N^{\mathbf{Q}}_{(1)}}^{*}\tilde{\mathbf{A}}\mathbf{e}_l,
    \end{aligned}
\end{equation}
where $\mathbf{N}^{\mathbf{Q}}_{(1)}$ denotes the mode-1 unfold of $\mathcal{N}^{\mathbf{Q}}$. Letting $\mathbf{n}^a$ represent the vectorization of $\mathbf{N}^a$ which can be written as\begin{equation}
    \begin{aligned}
        \mathbf{n}^a \in \mathbb{C}^{R^2\times1}&=\text{vec}(\mathbf{N}^a)\\
        &=\frac{1}{\sigma^2}\left(\tilde{\mathbf{A}}^T\otimes(\mathbf{D} \odot\mathbf{C} \odot \mathbf{B})^T\right)\text{vec}(\mathbf{N}_{(1)}^{\mathbf{Q}}),
    \end{aligned}
\end{equation}
in which \(\text{vec}(\mathbf{N}_{(1)}^{\mathbf{Q}})\) is a zero-mean circularly symmetric complex Gaussian vector characterized by the covariance matrix \(\sigma^2\text{tr}\left\{\mathbf{Q}^H\mathbf{Q}\right\} \mathbf{I}=\sigma_1^2\mathbf{I}\). As a consequence of \(\mathbf{n}^a\) being a linear transformation of \(\text{vec}(\mathbf{N}_{(1)}^{\mathbf{Q}})\), it also follows a circularly symmetric complex Gaussian distribution. Its covariance matrix \(\mathbf{C}_{\mathbf{n}^a} \in \mathbb{C}^{R^2 \times R^2}\) and second-order moments \(\mathbf{M}_{\mathbf{n}^a} \in \mathbb{C}^{R^2 \times R^2}\) are respectively provided below
\begin{equation}
    \begin{aligned}
        &\mathbf{C}_{\mathbf{n}^a} = \mathbb{E} \left[ \mathbf{n}^a (\mathbf{n}^a)^H \right]\\
        &= \frac{1}{\sigma_1^2}\times\left(\tilde{\mathbf{A}}^T\otimes(\mathbf{D} \odot\mathbf{C} \odot \mathbf{B})^T\right)\left(\tilde{\mathbf{A}}^*\otimes(\mathbf{D} \odot\mathbf{C} \odot \mathbf{B})^*\right)\\
        &= \frac{1}{\sigma_1^2}\left( \tilde{\mathbf{A}}^T \tilde{\mathbf{A}}^* \right) \otimes\left( (\mathbf{D} \odot\mathbf{C} \odot \mathbf{B})^T (\mathbf{D} \odot\mathbf{C} \odot \mathbf{B})^* \right) ,\\
    \end{aligned}
\end{equation}
where $\mathbf{M}_{\mathbf{n}^a} = \mathbb{E} \left[ \mathbf{n}^a (\mathbf{n}^a)^T \right] = \mathbf{0}$. Thus, it leads to
\begin{equation}
    \mathbb{E} \left\{ \left( \frac{\partial L(p)}{\partial \theta_{l_1}} \right)^* \left( \frac{\partial L(p)}{\partial \theta_{l_2}} \right) \right\} = \text{Re}\left\{\mathbf{C}_{\mathbf{n}^a}(m,n)\right\},
\end{equation}
where $m=R(l_1-1)+l_1$ and $n=R(l_2-1)+l_2$, from the fact that $\mathbf{n}^a$ is obtained by applying the vectorization operation to matrix $\mathbf{N}^a$. In the next step, we calculate the submatrices located off the main diagonal in $\mathbb{E}\Big[(\frac{\partial L(\mathbf{p})}{\partial \mathbf{p}})^H(\frac{\partial L(\mathbf{p})}{\partial \mathbf{p}})\Big]$, i.e. the $(l_1,l_2)$th entry in $\mathbb{E} \left\{ \left( \frac{\partial L(p)}{\partial \boldsymbol{\Theta}} \right)^* \left( \frac{\partial L(p)}{\partial \boldsymbol{\Phi}} \right) \right\}$ can be given as
\begin{equation}
\begin{aligned}
        &\mathbb{E} \left\{ \left( \frac{\partial L(p)}{\partial \theta_{l_1}} \right)^* \left( \frac{\partial L(p)}{\partial \phi_{l_2}} \right) \right\}\\
        &= 4\mathbb{E} \left[ \text{Re} \left\{ \mathbf{e}_{l_1}^T \mathbf{N}^a \mathbf{e}_{l_1} \right\} \text{Re} \{ \mathbf{e}_{l_2}^T \mathbf{N}^d \mathbf{e}_{l_2} \} \right]\\
        &= \mathbb{E} \left[ (\mathbf{N}^a(l_1, l_1) + \mathbf{N}^a(l_1, l_1)^*) (\mathbf{N}^d(l_2, l_2) + \mathbf{N}^d(l_2, l_2)^*) \right]\\
        &= 2\text{Re} \left\{ \mathbf{C}_{\mathbf{n}^{a,d}}(m, n) \right\},
\end{aligned}
\end{equation}
where
\begin{equation}
\begin{aligned}
        &\mathbf{C}_{\mathbf{n}^{a,d}} = \mathbb{E} \left[ (\mathbf{n}^a)(\mathbf{n}^d)^H \right]=\\
        &\frac{1}{\sigma_1^4} (\tilde{\mathbf{A}} \otimes (\mathbf{D} \odot\mathbf{C} \odot \mathbf{B}))^T \mathbf{C}_{\mathbf{n}^{\mathbf{Q}}_{(1)(4)}} (\tilde{\mathbf{D}}^* \otimes (\mathbf{C} \odot\mathbf{B} \odot \mathbf{A})^*),
\end{aligned}
\end{equation}
in which
\begin{equation}
\begin{aligned}
\mathbf{C}_{\mathbf{n}^{\mathbf{Q}}_{(1)(4)}} = \mathbb{E} \left[ \text{vec}\left((\mathbf{N}_{(1)}^{\mathbf{Q}})^H\right) \text{vec}\left((\mathbf{N}_{(4)}^{\mathbf{Q}})^T\right)^T \right],
\end{aligned}
\end{equation}
and we have $\mathbf{C}_{\mathbf{n}^{d,a}}= (\mathbf{C}_{\mathbf{n}^{a,d}})^H$. The computation of $\mathbf{C}_{\mathbf{n}^{\mathbf{Q}_{(1)(4)}}}$ is elaborated as follows. Consider the four-dimensional tensor $\mathcal{N}^{\mathbf{Q}} \in \mathbb{C}^{Q_\text{MS} \times K \times M \times N_s}$, whose $(q,k,m,n)$th entry maps to specific positions in the mode-4 unfolded matrix $\mathbf{N}^{\mathbf{Q}}_{(4)}$. Specifically, this entry corresponds to the $(q,(n-1)KM+(m-1)K+k)$th element of $\mathbf{N}^{\mathbf{Q}}_{(4)}$, and equivalently to the $(k,(n-1)Q_{\text{MS}}M+(m-1)Q_{\text{MS}}+q)$th element of the other matrix. In a similar manner, the $(q,(n-1)KM+(m-1)K+k)$th entry of $\mathbf{N}^{\mathbf{Q}}_{(1)}$ corresponds to the $((q-1)N_sKM+(n-1)KM+(m-1)K+k)$th component of $\text{vec}((\mathbf{N}_{(1)}^{\mathbf{Q}})^H)$. Likewise, the $(k,(n-1)Q_{\text{MS}}M+(m-1)Q_{\text{MS}}+q)$th entry of $\mathbf{N}_{(4)}^{\mathbf{Q}}$ maps to the $((k-1)Q_\text{MS}N_sM+(n-1)Q_{\text{MS}}M+(m-1)Q_{\text{MS}}+q)$th position in $\text{vec}((\mathbf{N}_{(4)}^{\mathbf{Q}})^T)$. These index mappings establish a consistent coordinate correspondence across different tensor representations and are essential for the subsequent covariance derivation. Since entries in \( \mathcal{N}^\mathbf{Q} \) are i.i.d. random variables, i.e.,

\begin{equation}
    \begin{aligned}
        \mathbb{E}&\{n^\mathbf{Q}_{q_1,k_1,m_1,n_1} (n^\mathbf{Q}_{q_2,k_2,m_2,n_2})^*\} \\
        &= 
\begin{cases} 
\sigma_1^2, & q_1=q_2, m_1 = m_2, n_1 = n_2, k_1 = k_2, \\
0, & \text{otherwise},
\end{cases}
    \end{aligned}
\end{equation}
where \( n^\mathbf{Q}_{q_1,k_1,m_1,n_1} \) corresponds to the \((q_1,k_1,m_1,n_1)\)th entry in \(\mathcal{N}^\mathbf{Q}\). Thus, the matrix \(\mathbf{C}_{\mathbf{n}^{\mathbf{Q}_{(1)(4)}}}\) contains exactly \(Q_\text{MS}KMN_s\) nonzero elements, and the index set for these entries, \(\{(n_1,n_2)|\mathbf{C}_{\mathbf{n}^{\mathbf{Q}}_{(1)(4)}} \neq 0\}\), is identical to
\begin{equation*}
    \begin{aligned}
        &\{(n_1,n_2)|\\
        &n_1 = (q-1)N_sKM+(n-1)KM+(m-1)K+k,\\
        &n_2 =(k-1)Q_\text{MS}N_sM\!+\!(n-1)Q_{\text{MS}}M\!+\!(m-1)Q_{\text{MS}}\!+\!q,\\
         &\quad\quad\quad\quad\quad\quad\forall q,k,m,n\}.
    \end{aligned}
\end{equation*}
After obtaining the FIM, the CRB for the parameters $\mathbf{p}$ can be calculated as $\text{CRB}(\mathbf{p}) = \mathbf{\Omega}^{-1}(\mathbf{p})$.

\balance
\bibliographystyle{ieeetr}
\bibliography{ref}

\end{document}